\def\BibTeX{{\rm B\kern-.05em{\sc i\kern-.025em b}\kern-.08em
    T\kern-.1667em\lower.7ex\hbox{E}\kern-.125emX}} 
\newtheorem{theorem}{Theorem}
\newtheorem{lemma}{Lemma}
\newtheorem{definition}{Definition}
\newtheorem{assumption}{Assumption}
\newtheorem{example}{Example}[section]
\begin{document}

\title{\LARGE {\bf Synthesis of Discounted-Reward Optimal Policies for Markov  Decision  Processes Under  Linear  Temporal  Logic  Specifications}}
\author{\normalsize Krishna C. Kalagarla, Rahul Jain,  Pierluigi Nuzzo\\
Ming Hsieh Department of Electrical and Computer Engineering, University of Southern California, Los Angeles \\
Email: \{kalagarl,rahul.jain,nuzzo\}@usc.edu
}

\maketitle

\begin{abstract}
 We present a method to find an optimal policy with respect to a reward function for a discounted Markov decision process under general linear temporal logic (LTL) specifications. Previous work has either focused on maximizing a cumulative reward objective under finite-duration tasks, specified by syntactically co-safe LTL, or maximizing an average reward for persistent (e.g., surveillance) tasks. This paper extends and generalizes these results by introducing a pair of occupancy measures to express the LTL satisfaction objective and the expected discounted reward objective, respectively. These occupancy measures are then connected to a single policy via a novel reduction resulting in a mixed integer linear program whose solution provides an optimal policy. Our formulation can also be extended to include additional constraints with respect to secondary reward functions. We illustrate the effectiveness of our approach in the context of robotic motion planning for complex missions under uncertainty and performance objectives. 
\end{abstract}

\section{Introduction}

The deployment of autonomous systems in safety critical applications, such as transportation \cite{koopman2017autonomous}, robotics \cite{guiochet2017safety}, and advanced manufacturing, has been rapidly increasing over the past few years, 
calling for high-assurance design methods that can provide strong guarantees of system correctness, safety, and performance. 


Markov decision processes ~\cite{Puterman:1994:MDP:528623} offer a natural
framework to capture sequential decision making problems and reason about autonomous system behaviors in many applications where the system dynamics are not deterministic but rather affected by uncertainty. 
In classical MDP planning, rewards are assigned to pairs of states and actions in the MDP. An expected (total, discounted, or averaged) reward is then maximized to find an optimal policy, which ensures that a design objective is achieved \cite{sutton2018reinforcement,mnih2013playing, kaelbling1996reinforcement}. A critical step in this process is the formulation of the reward structure, since an incorrect definition may lead to unforeseen behaviors that can be unsafe or even fail to meet the requirements. To alleviate this difficulty, an increasing interest has been directed over the past decade toward tools from formal methods and temporal logics \cite{baier2008principles} as a way to unambiguously capture complex design objectives and rigorously validate critical requirements  like safety, task scheduling, and motion planning in the context of discrete transition systems \cite{smith2011optimal, fainekos2009temporal, ulusoy2013optimality}, hybrid systems \cite{wongpiromsarn2011tulip, fainekos2005hybrid , wongpiromsarn2012receding}, and MDP planning~\cite{lahijanian2010motion, ding2014optimal,lahijanian2011temporal}.


Logically-driven planning for MDPs can be formulated as the problem of finding a policy that maximizes the probability of satisfying a given temporal logic formula~\cite{ding2011ltl,hahn2019omega, bozkurt2019control,hasanbeig2019certified}. For example, linear temporal logic (LTL) is expressive enough to formally capture, among others, safety, progress, surveillance, and monitoring tasks, such as ``avoid region A'' (safety), ``eventually reach A and remain there forever'' (reachability), and ``infinitely often visit A'' (surveillance). However, while certain design constraints, e.g., related to system performance, smoothness of motion, or fuel consumption rates, are deemed as ``soft'' in many applications, and can be naturally expressed in terms of expected reward or probability maximization, certain ``hard'' system constraints, such as safety and mission critical requirements, call for stronger guarantees, possibly involving almost sure satisfaction of temporal logic objectives. The focus of this  paper is on these composite tasks where a reward objective (characterizing a ``soft,'' user-defined objective) on an MDP is maximized while attempting to guarantee satisfaction of a temporal logic formula (characterizing a ``hard,'' mission-related objective) with probability one.

MDP planning under LTL constraints and reward optimality requirements has attracted significant attention in the recent literature. Some  efforts~\cite{lacerda2014optimal,lacerda2015optimal,ding2013strategic} aim to optimize a total expected reward under the constraint that a co-safe LTL formula is satisfied with high probability. Co-safe LTL is a fragment of LTL capable of expressing properties that can be satisfied over a finite time horizon. It is, however, less effective for the specification of persistent tasks such as surveillance or monitoring. The satisfaction of optimality criteria for persistent tasks  has been the focus of other formulations~\cite{ding2014optimal,wu2017learning}, where the objective is formalized as the expected average reward over trajectories satisfying an LTL formula. However, expected average rewards tend to emphasize optimality in the 
steady state and can possibly neglect important behaviors during the transient phase of a mission. A recent result aims to combine these approaches via a customized,  user-defined reward objective, defined as the weighted sum of an expected \emph{total reward} associated with the trajectory prefix and an expected \emph{average reward} for the trajectory suffix~\cite{guo2018probabilistic}. A \emph{unifying framework for  total (discounted) reward optimality} that extends to both trajectory prefixes (transient behaviors) and suffixes (steady-state behaviors) and can support unbounded-horizon tasks 
specified by \emph{general LTL} has been missing.  

This paper bridges this gap by proposing a new method that leverages the notion of occupancy measures~\cite{altman1999constrained} to generate a policy for a given MDP that almost surely satisfies a general LTL specification and maximizes the expected discounted reward. By optimizing an expected discounted reward, we
can adequately account for important transient behaviors as well as long-term optimality in a single objective. In our framework, discounted reward optimality and LTL satisfiability are both translated into linear constraints on a pair of occupancy measures, which we show can be connected to a single policy via additional binary variables, leading to  a mixed integer linear program. Our formulation can also be extended to include additional constraints expressed in terms of minimum expected discounted return with respect to secondary reward functions.
We illustrate the applicability of our approach on case studies in the context of robotic motion planning, showing that it can effectively discriminate optimal policies according to different expected discounted rewards among the ones that almost surely satisfy an LTL specification.

The rest of the paper is organized as follows. After an overview of the related work in Section \ref{sec:related}, Section \ref{sec:prelim} describes the preliminaries and Section \ref{sec:milp} introduces the problem formalization and elaborates on the solution approach. Finally, Section \ref{sec:experiment} discusses the application of the framework on two case studies in the context of motion planning.

\section{Related Work} \label{sec:related}

By building on results from model checking~\cite{baier2008principles}, the synthesis of MDP policies that maximize the probability of satisfaction of an LTL formula relying on maximizing     
the probability of reaching the maximal accepting end components of the product automaton between the original MDP and a (Rabin or B\"{u}chi) automaton representing the LTL formula has been studied in the context of known transition probabilities~\cite{ding2011ltl,sickert2016limit} and unknown transition probabilities \cite{hahn2019omega,bozkurt2019control,fu2014probably,hasanbeig2019certified,hasanbeig2020cautious}. The synthesis of a policy whose probability of satisfying multiple LTL specifications is above certain thresholds has also been considered in the literature~\cite{etessami2007multi}, albeit without additional quantitative reward objectives.

A method to synthesize a policy which maximizes the probability of satisfaction of an LTL formula via reward shaping has been proposed based on a Rabin acceptance condition ~\cite{sadigh2014learning,hiromoto2015learning}.
However, finding a policy that almost surely satisfies the LTL formula is not guaranteed, even if such a policy exists, when multiple Rabin pairs or rejecting end components are present~\cite{hahn2019omega}. Representing the LTL formula via a limit deterministic B\"{u}chi automaton (LDBA) has been recently proposed as an effective alternative~\cite{hahn2019omega,hasanbeig2019certified,bozkurt2019control,sickert2016limit}. LDBAs are as expressive as deterministic Rabin  automata (DRAs), while offering simpler B\"{u}chi acceptance conditions (requiring visiting a set of states infinitely often) rather than Rabin conditions (requiring visiting a set of states infinitely often and another set of states finitely often). This paper builds on the latter approach and further extends it to synthesize optimal policies that maximize an additional reward function while satisfying an LTL formula almost surely. 

Existing approaches to synthesizing reward-optimal policies under LTL specifications have focused on long-term average cost on states which must be visited infinitely often~\cite{ding2011mdp,ding2014optimal}, motivated by surveillance tasks, or total cost under co-safe (bounded-time horizon) LTL specifications~\cite{ding2013strategic,lacerda2014optimal,lacerda2015optimal}. The former approach relies on extending the average-cost-per-stage problem~\cite{bertsekas1995dynamic} and utilizing the DRA acceptance condition to formulate a dynamic programming problem. The latter efforts use a single occupancy measure to capture the cost and the probability of satisfying the co-safe LTL formula. 
Similarly, multiple reachability objectives and a total undiscounted cost have been considered under additional restrictions, where zero-reward goal states must be reached almost surely to ensure that the total cost is  finite~\cite{forejt2011quantitative,hartmanns2018multi}. In these cases, a single occupancy measure can be used to express the total cost and the reachability objectives, but it would not be sufficient to capture discounted objectives. Multiple discounted objectives (albeit with the same discount factor) can be expressed
via a single occupancy measure~\cite{chatterjee2006markov}, but this would not extend to capture an LTL objective.
More recently, Guo~et~al.~\cite{guo2018probabilistic} propose to account for both transient and steady-state behaviors by defining a customized cost as a weighted sum of a prefix-related total cost and a suffix-related average cost. It is then possible to optimize for this objective by introducing two occupancy measures for the trajectory prefix (to account for the probability of reaching the accepting end components and the associated total cost) and its suffix (to account for the average cost within the end components), respectively. 

Our approach differs from all these efforts, in that it can deal with a single, total discounted reward, without any additional assumptions on the real-valued reward function, and generic LTL specifications. By focusing on a total discounted reward, our approach  does not require the additional restriction that the total cost be finite. To enable these extensions under a unified reward structure, 
our framework must also rely on the consistency between two occupation measures. However, differently from the proposal above~\cite{guo2018probabilistic}, our measures are used to account for the \emph{total discounted reward} and the \emph{generic LTL satisfiability} objectives, respectively, and they are defined \emph{over the entire trajectory}.
 
By targeting almost sure satisfaction of generic LTL formulae via a single mixed integer program, our approach differs from previous efforts that leverage parity games~\cite{ufukgame} to achieve $\epsilon$-optimal policies for discounted rewards and a subclass of LTL formulae, or provide Pareto efficient policies with respect to a discounted reward and discounted probability of satisfying an LTL specification~\cite{fu2015pareto}.

Finally, mixed integer linear program (MILP) formulations were considered in the past to compute counterexamples to reachability properties and expected-reward constraints~\cite{quatmann2015counterexamples}, permissive schedulers under safety constraints~\cite{junges2016safety}, and Pareto fronts for multiple total reward and reachability objectives~\cite{delgrange2020simple}. We focus, instead, on optimal policy synthesis for discounted rewards and formulate a MILP over the occupancy measures, rather than the value functions.

\section{Preliminaries} \label{sec:prelim}

We denote the sets of real and natural numbers by $\mathbb{R}$ and $\mathbb{N}$, respectively. $\mathbb{R}_{\geq 0}$ is the set of non-negative reals. For a given finite set $S$, $S^{\omega}$ ($S^{+}$) denotes the set of all infinite (finite) sequences taken from $S$. The indicator function $\mathds{1}_{s_0}(s)$ evaluates to $1$ when $s = s_0$ and 0 otherwise. 

\noindent \textbf{Markov Decision Process.} 
A (labeled) Markov Decision Process (MDP) is defined as a tuple $ \mathcal{M} = (S,A,P,s_0,\gamma,AP,L,r)$, where $S$ is a finite state space, $A$ is a finite action space, $P: S \times A \times S \to [ 0,1 ]$ is the partial transition probability function, such that $P(s,a,s')$ is the probability to transition from state $s$ to state $s'$ on taking action $a$, $s_0 \in S $ is the initial state, $\gamma \in (0,1)$ is the discount factor, $AP$ is a set of atomic propositions, 
$L: S \to 2^{AP}$ is a labeling function which indicates the set of atomic propositions which are true in each state, and $r: S \times A \to \mathbb{R}$ is a reward function, such that $r(s,a)$ is the reward obtained on taking action $a\in A$ in state $s \in S$. We let $A(s)$ denote the set of actions allowed in state $s \in S$. 

The MDP evolves starting from the initial state $s_0$ by taking action $a \in A(s)$ for every current state $s$. A finite \emph{run} ${\xi_t}$ of the MDP at time $t \in \mathbb{N}$ is a sequence of past states and actions $s_0a_0s_1,a_1\ldots s_{t}$ up to time $t$. An infinite run $\xi$ is obtained by letting $t$ tend to infinity.  

A policy $\pi$ is a sequence of decision functions $\pi_0\pi_1\ldots$ such that  $\pi_t$ maps $\xi_t$ to the set of actions $A(s_t)$ available to the state at time $t$. 
If $\pi_t = \pi$ for all $t$, then the policy $\pi$ is said to be \emph{stationary}. A stationary policy $\pi$ is  \emph{randomized} when it is a probability distribution over the available actions, i.e.,  $\pi : S \times A \to [0,1] $. It is, instead,  \emph{deterministic} when it provides a unique action for each state, i.e., $\pi : S \to A$. On the other hand, a policy is said to be \emph{finite memory} if it also depends on the history of ${\xi_t}$ in addition to the current state. 

Given the total discounted reward associated with a run $\xi$ of $\mathcal{M}$, defined as $ G_{\mathcal{M}}(\xi) =  \sum_{t=0}^{\infty} { \gamma }^{t} r(s_t,a_t)$, we define the expected discounted reward as follows. 

\begin{definition}[Expected Discounted Reward] 
The expected discounted reward $\mathcal{R}_{\mathcal{M}}({\pi})$ of policy $\pi$ for MDP $ \mathcal{M}$ is the expectation of the total discounted reward obtained by following policy $\pi$ on $\mathcal{M}$, i.e.,
%
     $ \mathcal{R}_{\mathcal{M}}({\pi}) = {\mathbb E}_{\pi} \left[ G_{\mathcal{M}}(\xi) \right]$,
%
where the expectation is taken over the probability measure induced by policy $\pi$ on $\mathcal{M}$.
\end{definition}

\noindent\textbf{Linear Temporal Logic.}
We use linear temporal logic (LTL)~\cite{baier2008principles}, a temporal extension of propositional logic, to express complex task specifications. 

\vspace{3pt}
\noindent\underline{Syntax.} Given a set $AP$ of atomic propositions, i.e., Boolean variables that have a unique truth value ($\mathsf{true}$ or $\mathsf{false}$) for a given system state, LTL formulae are constructed inductively as  follows: 
\begin{equation*}
    \varphi := \mathsf{ true } \ | \ a \ | \ \neg  \varphi \ | \ \varphi_1 \wedge \varphi_2 \ | \ \textbf{X} \varphi \ | \ \varphi_1 \textbf{U} \varphi_2, 
\end{equation*}
where $a \in AP$, $\varphi$, $\varphi_1$, and $\varphi_2$ are LTL formulae, $\wedge$ and $\neg$ are the logic conjunction and negation,  and $\textbf{U}$ and $\textbf{X}$ are the \emph{until} and \emph{next} temporal operators. Additional temporal operators such as \emph{always} ($\textbf{G}$) and \emph{eventually} ($\textbf{F}$) are derived as $\textbf{F} \varphi := \mathsf{ true } \textbf{U} \varphi$ and $\textbf{G} \varphi := \neg \textbf{F} \neg \varphi$.

\vspace{3pt}
\noindent\underline{Semantics.} 
LTL formulae are interpreted over infinite-length words $w = w_0w_1 \ldots \in {(2^{AP})}^{\omega}$, where each letter $w_i$ is a set of atomic propositions. $w[j \ldots ] = w_jw_{j+1} \ldots $ is the suffix of $w$ starting from letter $w_j$. Informally, a word $w$ satisfies $a$ if $a \in w_0$ holds; it satisfies $\textbf{X} \varphi$ if $w[1 \ldots ]$ satisfies $\varphi$; it satisfies $\varphi_1 \textbf{U} \varphi_2$ if there exists $k\geq 0$ such that $w[k \ldots ]$ satisfies $\varphi_2$ and for all $j \in \{0,\ldots,k-1\}$, $w[j \ldots ]$ satisfy $\varphi_1$. The word $w$ satisfies $\textbf{F} \varphi$ if there exists $j\geq 0$ such that $w[j \ldots ]$ satisfies $\varphi$. Finally, $w$ satisfies $\textbf{G} \varphi$ if $w[j \ldots ]$ satisfies $\varphi$ for all $j\geq 0$. These semantics are formally defined as follows, where $\models$ denotes satisfaction: 
\begin{align*}
    w &\models \mathsf{ true },\\
    w &\models a \text{ if and only if  } a \in w_0,\\
    w &\models \varphi_1 \wedge \varphi_2 \text{ if and only if  } w \models \varphi_1 \text{ and } w \models \varphi_2,\\
    w &\models \neg \varphi \text{ if and only if  } w \not \models \varphi,\\
    w &\models \textbf{X} \varphi \text{ if and only if  } w[1\ldots] = w_1w_{2} \ldots \models  \varphi, \\
    w &\models \varphi_1 \textbf{U}  \varphi_2\text{ if and only if  } \exists \ k \geq 0 \text{ s.t. } w[k \ldots] \models \varphi_2\\ \text{ and } & \forall j  < k , w[j \ldots] \models \varphi_1,\\
    w &\models \textbf{G} \varphi \text{ if and only if  } \forall \ i \geq 0 , w[i \ldots] \models \varphi,\\
    w &\models \textbf{F} \varphi\text{ if and only if  } \exists \ i \geq 0 , w[i \ldots] \models \varphi.
\end{align*}

Given an MDP $\mathcal{M}$ and an LTL formula $\varphi$, a run $\xi = s_0a_0s_1a_1\ldots$ of the MDP under policy $\pi$ is said to satisfy $\varphi$ if the
word $w = L(s_0)L(s_1)\ldots \in {(2^{AP})}^{\omega}$ generated by the run satisfies $\varphi$. The probability that a run of $\mathcal{M}$ satisfies $\varphi$ under policy $\pi$ is denoted by $Pr_{\pi}^{\mathcal{M}}(\varphi)$. 

\noindent\textbf{Limit Deterministic B\"{u}chi Automata.} 
The language defined by an LTL formula, i.e., the set of words satisfying the formula, can be captured by a Limit Deterministic B\"{u}chi Automaton (LDBA)~\cite{hahn2019omega,kvretinsky2018rabinizer}. 

We consider an LDBA $\mathcal{A}$ denoted by a tuple $(Q, \Sigma, q_0, \delta, F)$, where $Q$ is a finite set of states, $\Sigma$ is a finite alphabet, $q_0 \in Q$ is an initial state, $\delta :
Q \times (\Sigma \cup \{\epsilon\})  \to 2^ {Q}$  is a partial transition function, and $F \subseteq Q \times \Sigma \times Q$ is a set of accepting transitions. We denote by  $\epsilon$-transition a state transition that is not labelled (triggered) by a symbol in $\Sigma$. $\mathcal{A}$ can be seen as two disjoint deterministic automata connected by $\epsilon$-transitions~\cite{hahn2019omega,kvretinsky2018rabinizer}. Specifically, the states $Q$ can be partitioned into a set of initial states $Q_i$ and a set of accepting states $Q_f$ such that: (i) $Q_i$ and  $Q_f$ are disjoint, i.e., $Q_i \cap Q_f = \emptyset$; (ii) $Q_i$ and  $Q_f$ are connected only by $\epsilon$-transitions; and (iii) the following properties hold: 
\begin{itemize}
\item $\epsilon$-transitions are allowed only from $Q_i$ to $Q_f$, i.e., $|\delta(q,\epsilon)| = \emptyset, \ \forall q \in Q_f$, and $\delta(q,\epsilon) \cap Q_i = \emptyset, \ \forall q \in Q_i $;
\item all transitions except $\epsilon$-transitions are deterministic, i.e., $ |\delta(q,\alpha)| \leq 1,\  \forall q \in Q, \forall \alpha \in \Sigma$;
\item all transitions starting from $Q_f$ end in $Q_f$, i.e., $\delta(q,\alpha) \subseteq  Q_f, \  \forall  q \in Q_f, \forall \alpha \in \Sigma$;
\item all accepting transitions lie in $Q_f$, i.e., $F \subseteq Q_f \times \Sigma \times Q_f $.
\end{itemize}
Therefore, the only non-deterministic transitions are $\epsilon $-transitions from $Q_i$ to $Q_f$. 

A \emph{run} $\xi$ of an LDBA $\mathcal{A}$ on word $w \in \Sigma^{\omega}$ is an infinite sequence of transitions,   $\xi_{0}\xi_{1}\ldots = (q_0,w_0,q_1)(q_1,w_1,q_2) \ldots$, 
 such that $\xi_i \in \delta$, $\forall i \geq 0$. The set of transitions that occur infinitely often in $\xi$ are denoted by $\inf(\xi)$.
A run $\xi$ is \emph{accepting} if  $\inf(\xi) \cap F \neq \emptyset$, i.e., the B\"{u}chi acceptance condition holds. In this case, we also use the formula $\varphi_F$ to express the acceptance condition and say that $\xi \models \varphi_F$. A word $w \in \Sigma^{\omega}$ is accepted by $\mathcal{A}$ if and only if there exists an accepting run $\xi$ of $\mathcal{A}$ on $w$. Finally, we say that an LTL formula is equivalent to an LDBA $\mathcal{A}$ if and only if the language defined by the formula is the language accepted by $\mathcal{A}$.

In this paper, we will refer to this kind of LDBA representation, which can be obtained, for example, by first constructing a Limit Deterministic Generalized B\"{u}chi Automaton (LDGBA)~\cite{sickert2016limit}, characterized by multiple acceptance sets $F_i$, $i = 1,\ldots,n$, each of which must be visited infinitely often for a run to be accepted. The LDGBA can then be reduced to an equivalent LDBA as defined above, by leveraging the procedure outlined below, similar to the one used to reduce a Generalized B\"{u}chi Automaton (GBA) to an equivalent Non-deterministic B\"{u}chi Automaton (NBA)~\cite{baier2008principles}.

Given an LDGBA with number of acceptance sets $n > 1$, all belonging to $Q_f$ by construction~\cite{sickert2016limit}, we can replicate the set of states $Q_f$ and the transitions among them $n-1$ times to obtain $Q_{f1}, \ldots, Q_{fn}$. We can then modify the accepting transitions $F_1,\ldots, F_n$ in $Q_{f1}, \ldots, Q_{fn}$, respectively, such that, for $i= 1,\ldots,n-1$, each accepting transition $(q,e,q')$ of $F_i$ in $Q_{fi}$ is replaced by a new transition that starts in $Q_{fi}$ and ends in the state that corresponds to $q'$ in  $Q_{f(i+1)}$. Similarly, the accepting transitions $F_n$ of $Q_{fn}$ can be modified to end in $Q_{f1}$. The newly defined transitions corresponding to $F_{1}$ of $Q_{f1}$, which we denote by $F'$, form the single acceptance set of the resulting LDBA. This construction ensures that $F'$ is visited infinitely if and only if $F_1,\ldots,F_n$ are met infinitely often. Further, the resulting LDBA conforms to our definition. We use the tool \textsc{Rabinizer 4}~\cite{kvretinsky2018rabinizer} which provide an equivalent LDBA 
for a given LTL formula.

For example, the LDGBA in Fig.~\ref{fig:ldgba}, with two acceptance sets \mbox{ $F_{1} = \{(s,e,s'),(t,f,t')\}$} and \mbox{ $F_{2} = \{(u,g,u')\}$}, can be reduced to an equivalent LDBA, as shown in Fig.~\ref{fig:ldba}, with acceptance set $F' = \{(s_1,e,s_{2}'),(t_1,f,t_{2}')\}$, by the method described above. On taking an accepting transition from $F'$ (corresponding to $F_1$ in the original LDGBA) starting from $Q_{f1}$, we end up in $Q_{f2}$ and need to make the transition 
$(u_2,g,u_{1}')$ (corresponding to $F_2$ in the original LDGBA) to revert to $Q_{f1}$ and be able to take an accepting transition from $F'$ again. Thus, $F'$ is visited infinitely if and only if $F_1$ and $F_2$ are met infinitely often in the given LDGBA.

\begin{figure}[t]
    \centering
    \includegraphics[scale=0.2]{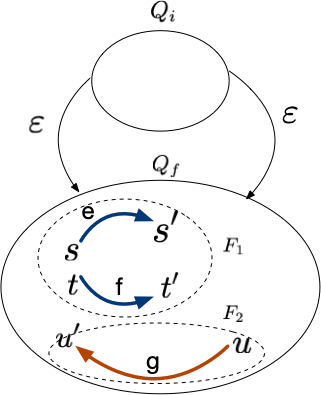}
    \caption{LDGBA with acceptance sets $F_{1}, F_{2}$.}
    \label{fig:ldgba}
\end{figure}

\begin{figure}[t]
    \centering
    \includegraphics[scale=0.2]{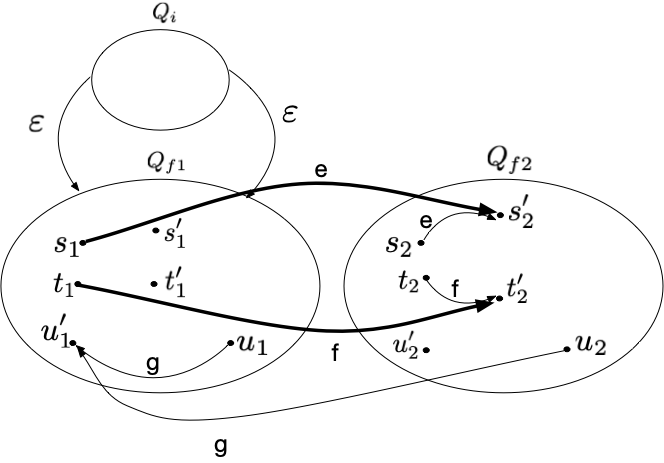}
    \caption{Equivalent LDBA with acceptance set $F'$.}
    \label{fig:ldba}
\end{figure}

\noindent\textbf{Occupancy Measures.}\label{sec:occu_measures} Occupancy measures \cite{altman1999constrained} allow formulating the problem of finding an optimal MDP policy as a linear program (LP) \cite{altman1999constrained,aaai2021}.

\noindent\underline{\textit{Discounted-Reward MDP.}} For a discounted MDP $\mathcal{M} = (S,A,P,s_0,\gamma,r)$, the occupancy measure $y^{\pi}: \mathbb{R}^{|S| \times |A|} \to \mathbb{R}_{\ge 0}$ of a stationary policy $\pi$ is defined as  $y_{sa}^{\pi} = \sum_{t=0}^{\infty} \gamma^t P_{\pi}(S_t=s,A_t=a),$
where $P_{\pi}(S_t=s,A_t=a)$ denotes the probability of being in state $s$ and taking action $a$ at time $t$ under policy $\pi$. In words, $y_{sa}^{\pi}$ is the (discounted) expected number of visits to state-action pair $(s,a)$ under policy $\pi$. The following linear inequalities, 
\begin{equation}\label{eq:ocdis}
    \begin{aligned}
    &y_{sa} \geq 0  \quad \forall s\in S \ \forall a \in A(s), \\
    &out(s) - in(s) =  \mathds{1}_{s_0}(s) \quad \forall s\in S,\\
    &out(s) = \sum_{a\in A(s)}y_{sa} \quad \quad \forall s\in S,\\
    &in(s) = \gamma \sum_{j\in S,a \in A(j)} y_{ja}P(j,a,s) \quad \forall s\in S, 
    \end{aligned}
\end{equation}
specify necessary and sufficient conditions for $y$
to be an occupancy measure for a discounted MDP $\mathcal{M}$~\cite{altman1999constrained}, where $out(s)$ and $in(s)$ denote the total \emph{outflow} from, and \emph{inflow} into, state $s$, respectively. The above inequalities can be interpreted as establishing the conservation of flow through each state. Moreover, the expected discounted reward under a stationary policy with occupancy measure $y$ can be expressed as $\sum_{s \in S} \sum_{a \in A(s)} r(s,a) y_{sa}$.

A corresponding stationary policy can be derived from an occupancy measure $y$ by setting $\pi(s,a)$ to be equal to $y_{sa} / \sum_{a \in A(s)} y_{sa}$ if $\sum_{a \in A(s)} y_{sa} > 0$, or arbitrary otherwise. Therefore, inequalities~\eqref{eq:ocdis} characterize the set of all stationary policies for $\mathcal{M}$.

\noindent\underline{\textit{Absorbing MDP.}}
An occupancy measure can also be defined for an absorbing MDP $\mathcal{M}_g = (S,A,P,s_0)$, that is, an MDP for which there exist a policy and a goal (absorbing) state $s_g \in S$ such that $s_g$ is reached almost surely. For an absorbing MDP $\mathcal{M}_g$ and policy $\pi$, an occupancy measure $x^{\pi}:  \mathbb{R}^{(|S|-1) \times |A|} \to \mathbb{R}_{\ge 0}$ can be defined~\cite{altman1999constrained} as $x_{sa}^{\pi} = \sum_{t=0}^{\infty} P_{\pi}(S_t=s,A_t=a), \quad \forall s \in S\setminus \{s_g\}$, interpreted as the expected number of visits to the state-action pair $(s,a)$ under policy $\pi$ before reaching the goal state. The measure is finite, as the goal state $s_g$ is reached almost surely.

Similarly to the case of discounted MDPs, the following set of linear inequalities
%
\allowdisplaybreaks \begin{align}\label{eq:ocssp}
    &x_{sa} \geq 0  \quad \forall s\in S\setminus \{s_g\} \ \forall a \in A(s), \nonumber \\
    &out(s) - in(s) =  \mathds{1}_{s_0}(s) \quad \forall s\in S\setminus \{s_g\}, \nonumber \\
    &in(s_g) = 1, \\
    &out(s) = \sum_{a\in A(s)}x_{sa} \quad \quad \forall s\in S\setminus \{s_g\},\nonumber\\
    &in(s) = \sum_{j\in S,a \in A(j)} x_{ja}P(j,a,s) \quad \forall s\in S,\nonumber 
    \end{align}
%
specify necessary and sufficient conditions for  $x$ to be an occupancy measure, where $out(s)$ and $in(s)$ are interpreted as in~\eqref{eq:ocdis} and $in(s_g) = 1$ ensures the almost sure reachability of the goal state. It is also possible to derive a corresponding stationary policy from an occupancy measure as described above. Therefore, the linear inequalities~\eqref{eq:ocssp} characterize the set of all stationary policies for $\mathcal{M}_{g}$ under which the goal state is reached almost surely.

\section{Problem Formulation and Solution Strategy} \label{sec:milp}

We aim to synthesize a policy $\pi$ for an MDP $\mathcal{M}$ that is optimal with respect to an expected discounted reward under the constraint that an LTL formula $\varphi$ is satisfied with probability 1. We make the following assumption which can be easily verified by using standard techniques from probabilistic model checking~\cite{KNP11}.
\begin{assumption}\label{asm}
For the given MDP $\mathcal{M}$ and LTL formula $\varphi$, there exists a policy such that the formula is satisfied with probability 1. 
\end{assumption}

We now introduce the methodology used to solve the above objective as illustrated in Fig.~\ref{fig:flowchart}. By representing the satisfaction of $\varphi$ with an LDBA $\mathcal{A}$ and leveraging a standard construction from LTL model checking of MDPs~\cite{sickert2016limit}, we generate a product MDP $\mathcal{M} \times \mathcal{A}$ as the composition of $\mathcal{M}$
and $\mathcal{A}$, such that almost sure satisfaction of a B\"{u}chi acceptance condition on the product MDP, shown to be possible by Assumption~\ref{asm}, implies the almost sure satisfaction of $\varphi$ on the original MDP. As further discussed in Section~\ref{sec:prodMDP}, it is sufficient to consider the set of stationary deterministic policies $\Pi_{d}^{\times}$ of the product MDP to find a policy satisfying the B\"{u}chi acceptance condition almost surely. Let $\Pi_{d}$ be the projection of $\Pi_{d}^{\times}$ onto the policy space of the original MDP $\mathcal{M}$. Assumption~\ref{asm} guarantees that a feasible policy exists in $\Pi_{d}$. We are then interested in the following problem:
\begin{equation} \label{eq:prob}
    \begin{aligned}
   \underset{\pi \in \Pi_d}{\text{ max }} \quad & {\mathcal R}_{\mathcal{M}}(\pi)\\ \mathrm{s.t.} \quad  & Pr_{\pi}^{\mathcal{M}}(\varphi) = 1.
\end{aligned}
\end{equation}

By building on a result from Hahn~et~al.~\cite{hahn2019omega}, we reduce the almost sure satisfaction of a B\"{u}chi acceptance condition on the product MDP to an almost sure reachability problem. We then introduce two occupancy measures to express both the reachability constraint and the discounted reward, and formulate a set of mixed integer linear constraints to guarantee that the occupancy measures are well defined and are both associated with the same deterministic policy. The solution of the resulting MILP provides the desired policy. In the following, we detail the above steps.  

\begin{figure}
    \centering
    \includegraphics[width=\linewidth]{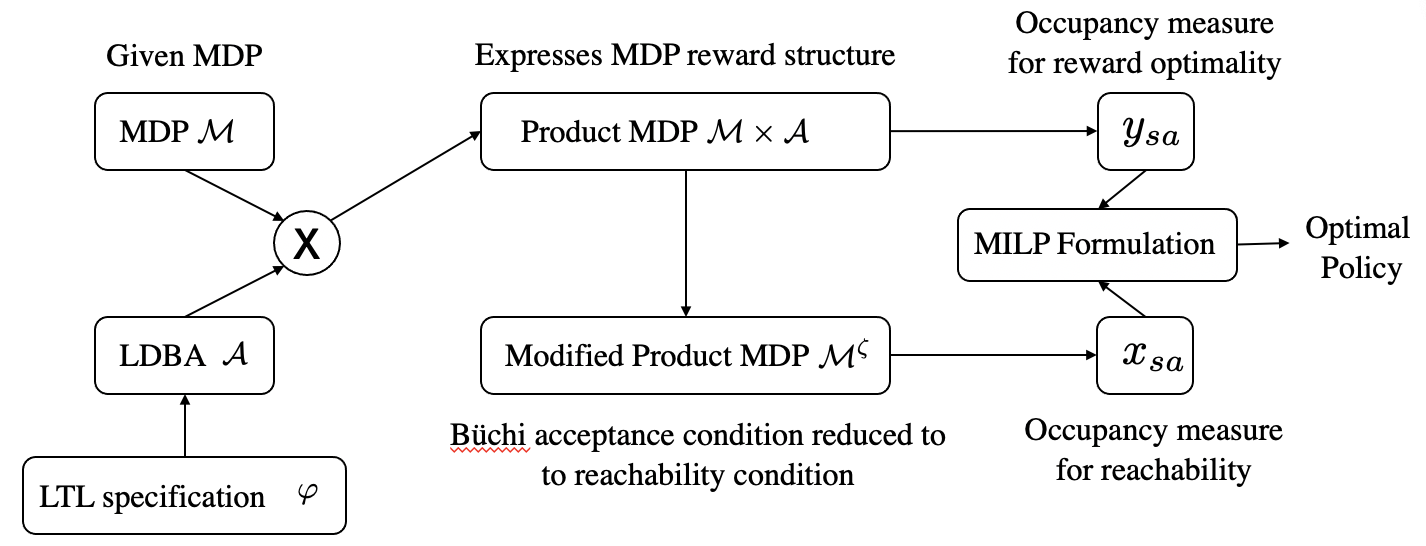}
    \caption{Flowchart describing the solution approach.}
    \label{fig:flowchart}
\end{figure}

\subsection{Construction of the Product MDP}
\label{sec:prodMDP}

Given an MDP $ \mathcal{M} = (S,A,P,s_0,\gamma,AP,L,r)$ and an LDBA $\mathcal{A}$ = $(Q, \Sigma, q_0, \delta, F)$ capturing the LTL formula $\varphi$, where $Q= Q_i \cup Q_f$, 
we follow the construction in~\cite{sickert2016limit} to define a product MDP ($\mathcal{M} \times \mathcal{A}$) $\mathcal{M} ^{\times} = (S^{\times},  A^{\times},P^{\times},s_{0}^{\times},\gamma,r^{\times}, F^{\times})$  which incorporates the transitions of $ \mathcal{M}$  and $ \mathcal{A}$, the reward function of $\mathcal{M}$ and the acceptance set of $\mathcal{A}$. The following running example illustrates the construction involved in our algorithm.

\begin{example}\label{example}

Consider the four state grid world MDP in Fig.~\ref{fig:4}. Each state associated with a location of the grid is labelled with the set of atomic propositions in  $\{l_0,l_1,m\}$, that are true in it. 
The transition diagram of the MDP is shown in Fig.~\ref{fig:5}. The initial state is III. The actions $\mathsf{rest}$ (denoted by red transition arrows) and $\mathsf{go}$ (denoted by green transition arrows) are available in each state. Action $\mathsf{rest}$ does not change the state of the MDP. With action $\mathsf{go}$, the agent moves horizontally, in the same row, with probability $p$ or vertically, in the same column, with probability $1 - p$.

\begin{figure}[h]
\centering
\begin{minipage}{.25\textwidth}
  \centering
  \includegraphics[width=.5\linewidth]{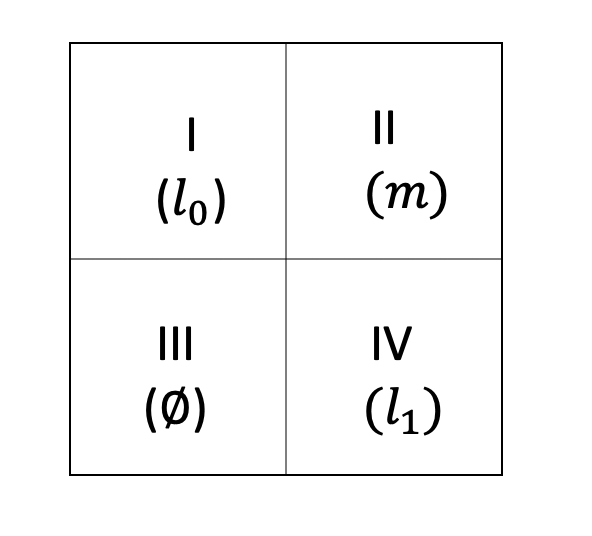}
  \caption{Grid world example.}
  \label{fig:4}
\end{minipage}%
\begin{minipage}{.25\textwidth}
  \centering
  \includegraphics[width=.7\linewidth]{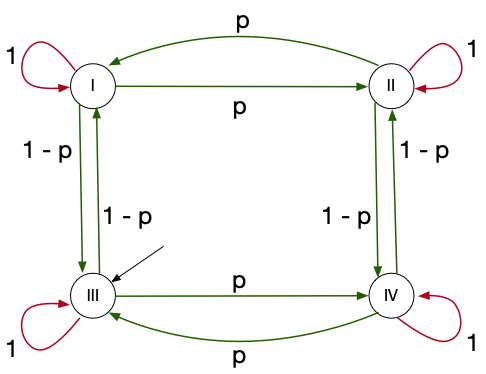}
  \caption{MDP transition diagram.}
  \label{fig:5}
\end{minipage}
\end{figure}

 We wish to satisfy the LTL formula $\varphi :=  (\textbf{FG} l_0 \vee \textbf{FG} l_1) \wedge \textbf{G} \neg m$, requiring to reach one of the safe cells (where $l_0$ or $l_1$ is true) and stay there forever, while avoiding the unsafe cell, where $m$ holds. An equivalent LDBA for this LTL formula is shown in Fig.~\ref{fig:6}. The transitions are labelled by propositional formulae which are equivalent to the Boolean assignments of the atomic propositions $\{l_0,l_1,m\}$ which satisfy the propositional formulae. The initial state is $0$ and the accepting transitions are marked in bold.
 \begin{figure}[h]
    \centering
    \includegraphics[scale=0.2]{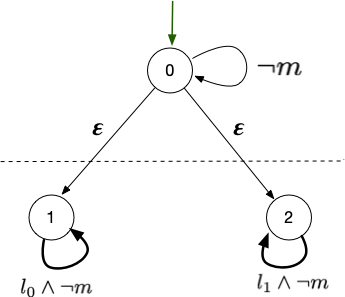}
    \caption{LDBA for $\varphi :=  (\textbf{FG} l_0 \vee \textbf{FG} l_1) \wedge \textbf{G} \neg m $ }
    \label{fig:6}
\end{figure}
 The only non-deterministic transitions are the $\epsilon$ transitions from $Q_i =  \{0\}$ to $Q_f = \{1,2\}$, where $Q_i \cap Q_f = \emptyset$ and $Q_i \cup Q_f = Q = \{0,1,2\}$. Further, $F \subseteq Q_f \times \Sigma \times Q_f$ holds, and there are no transitions from $Q_f$ to $Q_i$. As denoted by the dotted line, the states can be partitioned into $Q_i$ and $Q_f$.
\end{example}

In the product MDP $\mathcal{M} ^{\times}$, $S^{\times} = S \times Q$ is the set of states, $A^{\times} = A \cup E $ is the action set, where $E = \{ \epsilon_q | q \in Q_f \}$ is the set of actions which simulate $\epsilon$-transitions to states in $Q_f$, and $s_{0}^{\times} = (s_0,q_0)$ is the initial state. We then define the transition function $P^{\times}((s,q),a,(s',q'))$ as follows
\begin{equation} \label{eq:prodtrans}
\begin{aligned}
& \begin{cases} P(s,a,s'), &\mbox{if } q' = \delta(q,L(s)) \text{ and } a \not \in E, \\
1, &\mbox{if } s = s',\quad a = \epsilon_{q'}, \quad q' \in \delta(q,\epsilon), \\
0, & \text{otherwise.}
\end{cases}\\
\end{aligned}    
\end{equation}
and the reward function as
\begin{equation} \label{eq:prodrew}
\begin{aligned}
r^{\times}((s,q),a) & = \begin{cases} r(s,a), &\mbox{if }  a \not \in E , q \in Q_i, \\
r(s,a) / \gamma, &\mbox{if }  a \not \in E , q \in Q_f, \\
0, & \text{otherwise.}
\end{cases}\\
\end{aligned}    
\end{equation}
%
Finally, the set of accepting transition of $\mathcal{M}^{\times}$ includes the non-zero probability transitions of $\mathcal{M}^{\times}$ capturing the accepting transitions of  $\mathcal{A}$, i.e., $F^{\times} = \{ ((s_i,q_i),a,(s_j,q_j)) | (q_i,L(s_i),q_j) \in F , P(s_i,a,s_j) > 0  \}$. 

\begin{example}\label{example_product}
Fig.~\ref{fig:7} shows the product MDP $\mathcal{M}^{\times}$ between $\mathcal{M}$ and $\mathcal{A}$ in Example~\ref{example}.

\begin{figure}[h]
    \centering
    \includegraphics[scale=0.25]{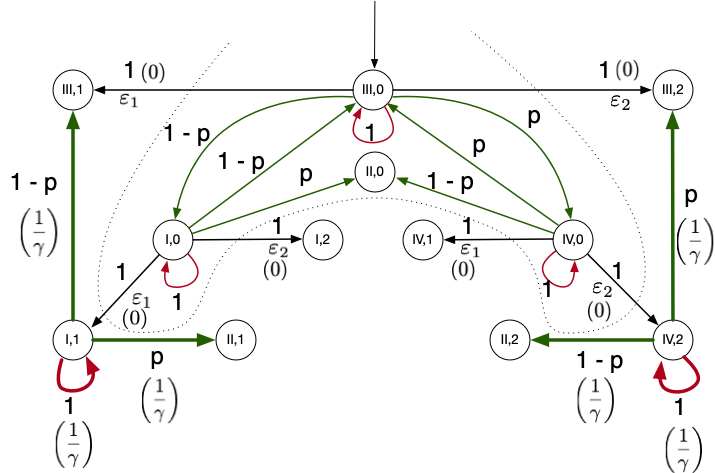}
    \caption{Product MDP $\mathcal{M}^{\times}$.}
    \label{fig:7}
\end{figure}

The $\mathsf{rest}$, $\mathsf{go}$, and $\epsilon$ actions are denoted by red, green, and black arrows, respectively. The accepting transitions $F^{\times}$ are in bold.  
The rewards associated with the discounted reward objective 
are obtained as follows. Epsilon actions $\epsilon_1$ and $\epsilon_2$ are associated with $0$ reward (in brackets). The rewards associated with transitions from states $(s,q)$, $q\in Q_f$, are multiplied by $\frac{1}{\gamma}$ to compensate for the artificial increase in time step and additional discounting due to the $\epsilon$-transition, as further detailed below. We observe that, differently from previous work~\cite{hahn2019omega,bozkurt2019control,hasanbeig2019certified}, the expected discounted reward in our setting is associated with a user-defined objective function and is unrelated to the LTL reachability objective.
\end{example}

A run $\xi^{\times}$ of $\mathcal{M}^{\times}$ is said to satisfy the B\"{u}chi acceptance condition $\varphi_{F^{\times}}$ if $\inf(\xi^{\times}) \cap F^{\times} \neq \emptyset$, i.e., there exists a transition belonging to $F^{\times}$ that occurs infinitely often in $\xi^{\times}$. The probability that a run of  $\mathcal{M}^{\times}$ starting from the initial state $s_0^{\times}$ under policy $\pi^{\times}$ satisfies the B\"{u}chi acceptance condition $\varphi_{F^{\times}}$ is denoted by $Pr_{\pi^{\times}}^{\mathcal{M}^{\times}}(\varphi_{F^{\times}})$. 

The transition function in~\eqref{eq:prodtrans} represents the effect of $\epsilon$-transitions in $\mathcal{A}$, which are the only non-deterministic transitions,  in terms of actions $\epsilon_{q^{'}}$ that cause the current state $(s,q)$ to transition to a state $(s',q')$, where  
$q'$ 
is the destination of the corresponding $\epsilon$-transition in $\mathcal{A}$, while $s'=s$, 
meaning that no change in the corresponding  state of $\mathcal{M}$ takes place. 
Only when an action $a \not\in E$ is taken, the corresponding state $s$ of $\mathcal{M}$ is updated based on the transition  probability $P(s,a,s')$, while $q$ is updated as determined by the corresponding transition function $\delta$ of $\mathcal{A}$ and the label $L(s)$ associated with $s$. 

We observe that a run $\xi^{\times}$ of $\mathcal{M}^{\times}$ can contain at most one $\epsilon$-transition, corresponding to a jump from $Q_i$ to $Q_f$ in $\mathcal{A}$. A run 
$(s_0,q_0)a_0(s_1,q_1)a_1 \ldots (s_l,q_l) \epsilon_{q_{l+1}}(s_{l+1},q_{l+1})a_{l+1} \ldots$, with $s_{l+1} = s_l$, corresponds to the run $s_0,a_0,s_1,a_1,\ldots,s_{l-1},a_{l-1},s_{l},a_{l+1},s_{l+2},a_{l+2},\ldots$ of $\mathcal{M}$, where we observe that 
an $\epsilon$-action in $\mathcal{M}^{\times}$ does not reflect into a change of state for $\mathcal{M}$. 
The additional division by $\gamma$ introduced in~\eqref{eq:prodrew} compensates for the  discounting caused by the $\epsilon$-transition in $\mathcal{M}^{\times}$ and the resulting additional time-step, as stated by the lemma below.
\vspace{3pt}
%

\begin{lemma}\label{lem:rwd}
Given an MDP $\mathcal{M}= (S,A,P,s_0,\gamma,AP,L,r)$, an LTL formula $\varphi$, and an LDBA $\mathcal{A} = (Q, \Sigma, q_0, \delta, F)$ equivalent to $\varphi$, let $\mathcal{M}^{\times} = \mathcal{M} \times \mathcal{A}$. The total discounted reward of a run $\xi^{\times}$ 
of $\mathcal{M}^{\times}$ is equal to the total discounted reward of $\xi$, obtained by  projecting $\xi^{\times}$ to the state and action spaces of $\mathcal{M}$.
\end{lemma}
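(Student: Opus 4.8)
The plan is to compare the two discounted sums stage by stage, exploiting the decomposition of any product run into the part before its (unique) $\epsilon$-transition, the $\epsilon$-transition itself, and the part after it. First I would dispose of the degenerate case in which $\xi^{\times}$ contains no $\epsilon$-transition: then $\xi^{\times}$ remains in $Q_i$ forever, every action lies outside $E$, so by~\eqref{eq:prodrew} each per-stage reward equals $r(s_t,a_t)$, the projection $\xi$ coincides stage-by-stage with $\xi^{\times}$, and the two discounted sums are trivially identical. The substantive case is when $\xi^{\times}$ takes exactly one $\epsilon$-transition, say at stage $l$, as in the run displayed just before the lemma; here $q_0,\dots,q_l \in Q_i$ and $q_{l+1},q_{l+2},\dots \in Q_f$, because $\epsilon$-transitions go only from $Q_i$ to $Q_f$ and $Q_f$ is closed under the remaining transitions.

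Next I would expand $G_{\mathcal{M}^{\times}}(\xi^{\times}) = \sum_{t=0}^{\infty}\gamma^t r^{\times}((s_t,q_t),b_t)$, where $b_t = a_t$ for $t \neq l$ and $b_l = \epsilon_{q_{l+1}}$, and split it at stage $l$ according to the three cases of~\eqref{eq:prodrew}. For $t \le l-1$ we have $q_t \in Q_i$ and $b_t \notin E$, giving $r^{\times} = r(s_t,a_t)$; at $t=l$ the action is an $\epsilon$-action, contributing $0$; and for $t \ge l+1$ we have $q_t \in Q_f$ and $b_t \notin E$, giving $r^{\times} = r(s_t,a_t)/\gamma$. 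Collecting terms yields
\begin{equation*}
G_{\mathcal{M}^{\times}}(\xi^{\times}) = \sum_{t=0}^{l-1}\gamma^t r(s_t,a_t) + \sum_{t=l+1}^{\infty}\gamma^{t-1} r(s_t,a_t),
\end{equation*}
where the factor $1/\gamma$ has absorbed one power of $\gamma$ throughout the suffix.

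The heart of the argument is then a reindexing of the suffix that matches the projection. The projected run $\xi$ drops the $\epsilon$-stage and, since $s_{l+1}=s_l$, the repeated $\mathcal{M}$-state; writing its states and actions as $\tilde s_t,\tilde a_t$, one has $\tilde s_t = s_t,\ \tilde a_t = a_t$ for $t \le l-1$, then $\tilde s_l = s_l,\ \tilde a_l = a_{l+1}$, and $\tilde s_t = s_{t+1},\ \tilde a_t = a_{t+1}$ for $t \ge l+1$. Substituting $u=t-1$ in the suffix sum above gives $\sum_{u=l}^{\infty}\gamma^{u} r(s_{u+1},a_{u+1})$, and using $s_{l+1}=s_l$ for the $u=l$ term this equals $\sum_{t=l}^{\infty}\gamma^{t} r(\tilde s_t,\tilde a_t)$. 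Hence $G_{\mathcal{M}^{\times}}(\xi^{\times}) = \sum_{t=0}^{\infty}\gamma^t r(\tilde s_t,\tilde a_t) = G_{\mathcal{M}}(\xi)$, which is the claim.

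I expect the only delicate point to be the index bookkeeping across the jump: the $\epsilon$-stage inserts one extra time step, which over-discounts the entire suffix by exactly one factor of $\gamma$, and the purpose of the $1/\gamma$ scaling in~\eqref{eq:prodrew} is precisely to cancel this, while the condition $s_{l+1}=s_l$ guarantees that the suffix state-action pairs line up after projection. I would also note, for rigor, that all the series involved converge absolutely, since $\mathcal{M}$ is finite (so $r$ is bounded) and $\gamma \in (0,1)$; this justifies the rearrangements and makes the per-run identity well defined.
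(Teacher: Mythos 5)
Your proof is correct and follows essentially the same route as the paper's own: a case split on whether $\xi^{\times}$ contains an $\epsilon$-transition, expansion of the product reward via~\eqref{eq:prodrew} with the $\epsilon$-stage contributing zero and the $Q_f$ suffix scaled by $1/\gamma$, and the conclusion that this matches the projected run's discounted reward. The only difference is that you carry out the suffix reindexing $u = t-1$ (and the use of $s_{l+1}=s_l$) explicitly, whereas the paper asserts the final identification without it; this is a refinement of the same argument rather than a different approach.
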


\begin{proof} 
We consider two cases, based on whether $\xi^{\times}$ contains an $\epsilon$-transition or not. Let us first assume that $\xi^{\times}$ does not contain $\epsilon$-transitions. Then, by~\eqref{eq:prodrew} and the fact that $q_n \in Q_i $ for all $n \geq 0$, we obtain 
\begin{equation*}
    \sum_{n=0}^{\infty} { \gamma }^{n} r^{\times}((s_n,q_n),a_n) = \sum_{n=0}^{\infty} { \gamma }^{n} r(s_n,a_n). 
\end{equation*}
This implies that $G_{\mathcal{M}^{\times}}(\xi^{\times}) = G_{\mathcal{M}}(\xi)$, i.e., the total discounted reward of $\xi^{\times}$ is equal to that of its projection $\xi^{\times}$ to the state and action spaces of $\mathcal{M}$.

Assume now that $\xi^{\times}$ contains an $\epsilon$-transition, that is, given  $\xi^{\times} = (s_0,q_0)a_0 \ldots (s_l,q_l) a_l (s_{l+1},q_{l+1}) \ldots $, we have $a_l = \epsilon_{q_{l+1}}$ and $s_{l+1}=s_l$. Then, by~\eqref{eq:prodrew}, we obtain
\begin{align*}
& \sum_{n=0}^{\infty} { \gamma }^{n}  r^{\times}((s_n,q_n),a_n) = \sum_{n=0}^{l-1} { \gamma }^{n} r^{\times}((s_n,q_n),a_n)) + \\
& \qquad +{\gamma}^{l}r^{\times}((s_l,q_l),\epsilon_{q_{l+1}})  + \sum_{n=l+1}^{\infty}{\gamma}^{n} r^{\times}((s_n,q_n),a_n),
\end{align*}
where $q_n \in Q_i $ for $n \leq l$ and $q_n \in Q_f $ for $n > l$. Further, by observing that $s_{l+1} = s_l$, and $r^{\times}((s_l,q_l),\epsilon_{q_{l+1}}) = 0$, we obtain
\begin{equation*}
 \sum_{n=0}^{\infty} { \gamma }^{n} r^{\times}((s_n,q_n),a_n) = \sum_{n=0}^{l-1} { \gamma }^{n} r(s_n,a_n)  + 
 \end{equation*}
 \begin{equation*}
(1/\gamma)(\sum_{n=l+1}^{\infty} { \gamma }^{n} r(s_n,a_n))
 \end{equation*}
which is the total discounted reward associated with the run $\xi: (s_0,a_0,\ldots,s_{l-1},a_{l-1},s_{l},a_{l+1},s_{l+2},a_{l+2},\ldots)$. Again, we conclude $G_{\mathcal{M}^{\times}}(\xi^{\times}) = G_{\mathcal{M}}(\xi)$, which is what we wanted to prove. 
\qed 
\end{proof}

By Theorem~$3$ in~\cite{sickert2016limit}, it is sufficient to focus on stationary policies of $\mathcal{M}^{\times}$ to maximize the probability of satisfaction of the LTL formula $\varphi$. 
We restate this result in Lemma~\ref{lem:buch} below.


\begin{lemma}\label{lem:buch}
For $\mathcal{M}$ and $\mathcal{M}^{\times}$ as defined in Lemma~\ref{lem:rwd}, any stationary policy $\pi^{\times}_{*}$ of $\mathcal{M}^{\times}$ which maximizes the probability of satisfying the   B\"{u}chi acceptance condition $Pr_{\pi^{\times}}^{\mathcal{M}^{\times}}(\varphi_{F^{\times}})$ in $\mathcal{M}^{\times}$  induces a finite memory policy $\pi_{*}$ on  $\mathcal{M}$ such that $\pi_{*}$ maximizes the probability of LTL satisfaction $Pr_{\pi}^{\mathcal{M}}(\varphi)$ and $Pr_{\pi^{\times}_{*}}^{\mathcal{M}^{\times}}(\varphi_{F^{\times}}) = Pr_{\pi_{*}}^{\mathcal{M}}(\varphi)$ holds.
\end{lemma}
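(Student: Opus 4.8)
The plan is to prove Lemma~\ref{lem:buch} by reducing it to the cited result (Theorem~3 in~\cite{sickert2016limit}) while carefully tracking how the product construction relates policies and satisfaction probabilities across the two models. The core fact supplied by the LDBA construction is that the language of $\mathcal{A}$ equals the language of $\varphi$, and the product MDP $\mathcal{M}^{\times}$ is built so that a run $\xi^{\times}$ satisfies the B\"{u}chi acceptance condition $\varphi_{F^{\times}}$ if and only if its projection to $\mathcal{M}$ generates a word accepted by $\mathcal{A}$, i.e., a word satisfying $\varphi$. First I would establish this run-level equivalence explicitly: using the definition of $F^{\times}$ and the determinism of all non-$\epsilon$ transitions, show that the $Q$-component of a run faithfully tracks the LDBA run on the generated word, so that $\inf(\xi^{\times}) \cap F^{\times} \neq \emptyset$ precisely captures the B\"{u}chi acceptance of the underlying word.

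Next I would address the policy correspondence. A stationary policy $\pi^{\times}_{*}$ on $\mathcal{M}^{\times}$ acts on the augmented state space $S \times Q$; projecting onto $\mathcal{M}$ yields a policy that, at each step, depends on the current MDP state $s$ together with the automaton state $q$, and $q$ is itself a function of the history of labels observed so far. Hence the induced policy $\pi_{*}$ is \emph{finite memory} on $\mathcal{M}$, with the memory given exactly by the finitely many automaton states $Q$ (together with the single one-shot $\epsilon$-choice recording the nondeterministic jump from $Q_i$ to $Q_f$). I would make this precise by defining $\pi_{*}$ as the policy whose memory state is the current $q$, updated deterministically via $\delta$ and $L$, and whose action distribution matches $\pi^{\times}_{*}((s,q),\cdot)$.

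I would then show the probability equality $Pr_{\pi^{\times}_{*}}^{\mathcal{M}^{\times}}(\varphi_{F^{\times}}) = Pr_{\pi_{*}}^{\mathcal{M}}(\varphi)$ by exhibiting a measure-preserving bijection between runs of $\mathcal{M}^{\times}$ under $\pi^{\times}_{*}$ and runs of $\mathcal{M}$ under $\pi_{*}$. Since $\epsilon$-transitions do not alter the $\mathcal{M}$-component (as noted in the discussion preceding Lemma~\ref{lem:rwd}) and occur at most once along any run, the probability assigned to a cylinder set of $\mathcal{M}^{\times}$-runs equals that of the corresponding cylinder of $\mathcal{M}$-runs; the deterministic automaton transitions contribute no extra probability. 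Combining this measure correspondence with the run-level acceptance equivalence established in the first step yields the claimed equality of satisfaction probabilities. The maximality claim, that $\pi_{*}$ maximizes $Pr_{\pi}^{\mathcal{M}}(\varphi)$, then follows from Theorem~3 in~\cite{sickert2016limit}, which guarantees that maximizing the B\"{u}chi acceptance probability over stationary policies of $\mathcal{M}^{\times}$ achieves the supremum of the LTL satisfaction probability over all policies of $\mathcal{M}$, together with the equality just shown.

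The main obstacle I anticipate is the rigorous justification that restricting attention to \emph{stationary} policies on $\mathcal{M}^{\times}$ loses no optimality, i.e., that the supremum of the acceptance probability is actually attained by a stationary policy of the product and that this supremum equals the best achievable LTL satisfaction probability on $\mathcal{M}$ over the full (history-dependent) policy class. This is precisely the content imported from Theorem~3 in~\cite{sickert2016limit}, so the proof will lean on that citation; my task is to verify that our product construction, including the explicit encoding of $\epsilon$-transitions as dedicated actions in $E$ and the absence of any reward coupling to $F^{\times}$, satisfies the hypotheses of that theorem so the result applies verbatim.
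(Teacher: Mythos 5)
The paper never actually proves this lemma: it is presented as a restatement of Theorem~3 of~\cite{sickert2016limit}, followed only by a prose description of how a stationary product policy is mapped back to a finite-memory policy on $\mathcal{M}$. Your proposal goes further and tries to make the argument self-contained, which is a reasonable ambition, but the first step of your plan contains a genuine error. You claim a run-level equivalence: that a run $\xi^{\times}$ of $\mathcal{M}^{\times}$ satisfies $\varphi_{F^{\times}}$ \emph{if and only if} its projection generates a word accepted by $\mathcal{A}$. Only the forward implication is true. The LDBA is nondeterministic precisely in its $\epsilon$-transitions, and in the product these choices are made by the policy (the actions in $E$). A word $w$ is accepted by $\mathcal{A}$ when \emph{some} run of $\mathcal{A}$ on $w$ is accepting; a particular product run fixes one $\epsilon$-choice (one jump time and target in $Q_f$), and that induced automaton run may fail the B\"{u}chi condition even though $w \models \varphi$ via a different $\epsilon$-choice. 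Had your ``iff'' been true, your measure-preservation argument would prove $Pr_{\pi^{\times}}^{\mathcal{M}^{\times}}(\varphi_{F^{\times}}) = Pr_{\pi}^{\mathcal{M}}(\varphi)$ for \emph{every} stationary $\pi^{\times}$, which is false: a policy that jumps to $Q_f$ at a bad moment can have product acceptance probability $0$ while its induced policy on $\mathcal{M}$ still satisfies $\varphi$ with probability $1$. This is exactly why the lemma needs the hypothesis that $\pi^{\times}_{*}$ is \emph{maximizing}.

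The proof can be repaired, and the repair shows where the cited theorem is doing irreplaceable work. Your measure correspondence (which is otherwise sound: the pushforward of the product measure under projection is the measure induced by $\pi_{*}$ on $\mathcal{M}$, and $\epsilon$-actions occur at most once and do not move the $\mathcal{M}$-component) combined with the \emph{one} valid direction of the run-level claim yields only the inequality $Pr_{\pi^{\times}}^{\mathcal{M}^{\times}}(\varphi_{F^{\times}}) \leq Pr_{\pi}^{\mathcal{M}}(\varphi)$ for any stationary $\pi^{\times}$ and its induced $\pi$. To close the gap one then invokes Theorem~3 of~\cite{sickert2016limit}, whose content is that $\max_{\pi^{\times} \text{ stationary}} Pr_{\pi^{\times}}^{\mathcal{M}^{\times}}(\varphi_{F^{\times}})$ equals $\sup_{\pi} Pr_{\pi}^{\mathcal{M}}(\varphi)$ over all (history-dependent) policies of $\mathcal{M}$; this rests on an end-component analysis showing a stationary product policy can always make the ``right'' $\epsilon$-choices, and it is not recoverable from the product construction alone. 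With that, the chain $Pr_{\pi^{\times}_{*}}^{\mathcal{M}^{\times}}(\varphi_{F^{\times}}) \leq Pr_{\pi_{*}}^{\mathcal{M}}(\varphi) \leq \sup_{\pi} Pr_{\pi}^{\mathcal{M}}(\varphi) = Pr_{\pi^{\times}_{*}}^{\mathcal{M}^{\times}}(\varphi_{F^{\times}})$ forces equality throughout and gives both conclusions of the lemma. So your overall architecture (projection, finite memory via $Q$, deferral of stationarity-sufficiency to the citation) matches what the paper intends, but the ``iff'' must be weakened to an inequality and the equality must be obtained through the citation, not through the bijection argument.
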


A stationary policy $\pi^{\times}$ for $\mathcal{M}^{\times}$ can be mapped to  $\pi$ for $\mathcal{M}$ as follows. $\mathcal{M}$ and $\mathcal{A}$ start from states $s_0$ and $q_0$, respectively. Whenever $\mathcal{M}$ transitions from $s$ to $s'$ on action $a \in A(s)$, $\mathcal{A}$ updates its state to $\delta(q,L(s))$. For states $s$ of $\mathcal{M}$ and $q$ of $\mathcal{A}$, action $a \in A^{\times}$ is selected with probability $\pi^{\times}((s,q),a)$. If the selected action is an $\epsilon$-action $\epsilon_{q'}$, then the state of $\mathcal{A}$ is updated to $q'$ while  $\mathcal{M}$ keeps the same state and selects the next action $a \in A$ with probability $\pi^{\times}((s,q'),a)$. Else, both $\mathcal{M}$ and $\mathcal{A}$ progress based on their transition function. Therefore, $\pi$ prescribes an action given the current state of 
$\mathcal{M}$ and $\mathcal{A}$ based on  $\pi^{\times}$.




By Lemma~\ref{lem:buch} and Assumption~\ref{asm}, there exist stationary policies for $\mathcal{M}^{\times}$ such that $Pr_{\pi^{\times}}^{\mathcal{M}^{\times}}(\varphi_{F^{\times}}) = Pr_{\pi}^{\mathcal{M}}(\varphi) = 1$. Moreover, in Section~\ref{sec:reachred}, $Pr_{\pi^{\times}}^{\mathcal{M}^{\times}}(\varphi_{F^{\times}}) = 1$ is shown to be equivalent to an almost sure reachability, for which optimal \emph{deterministic} stationary policies are guaranteed to exist~\cite{Puterman:1994:MDP:528623}. We can then look for a deterministic stationary policy $\pi^{\times}$ such that $Pr_{\pi^{\times}}^{\mathcal{M}^{\times}}(\varphi_{F^{\times}}) = 1$ and ${\mathcal R}_{\mathcal{M}^{\times}}(\pi^{\times})$ is maximized.  
Finally, by Lemma~\ref{lem:rwd} and the construction of the product MDP, we have that ${\mathcal R}_{\mathcal{M}}(\pi) = {\mathcal R}_{\mathcal{M}^{\times}}(\pi^{\times})$ holds, and Problem~\eqref{eq:prob} reduces to the following problem: 
\begin{equation} \label{eq:prob2}
    \begin{aligned}
   \underset{\pi^{\times} \in \Pi_{d}^{\times}}{\text{ max }} \quad & {\mathcal R}_{\mathcal{M}^{\times}}(\pi^{\times})\\ \mathrm{s.t.} \quad  & Pr_{\pi^{\times}}^{\mathcal{M}^{\times}}(\varphi_{F^{\times}}) = 1.
\end{aligned}
\end{equation}

\subsection{Reachability Reduction of B\"{u}chi Acceptance} \label{sec:reachred}

Recent work~\cite{hahn2019omega} has shown that a modified product MDP $\mathcal{M}^{\zeta}$ can be generated from  $\mathcal{M}^{\times}$ such that almost sure satisfaction of the B\"{u}chi acceptance condition is equivalent to almost sure reachability of a new absorbing state in $\mathcal{M}^{\zeta}$.

The modified product MDP $\mathcal{M}^{\zeta}$ is defined as  $(S^{\zeta}, A^{\zeta},P^{\zeta},s_0^{\zeta})$, where $S^{\zeta} = S^{\times} \cup \{g\} $, $A^{\zeta} = A^{\times}$ and $s_0^{\zeta} = s_0^{\times}$. $\mathcal{M}^{\zeta}$ is constructed from $\mathcal{M}^{\times}$ by adding a new absorbing state $g$ to $S^{\times}$. Given $\zeta \in (0,1)$, $P^{\zeta}$ is obtained from $P^{\times}$ as follows. For each accepting transition $((s_i,q_i),a,(s_j,q_j)) \in F^{\times} $, $g$ becomes the destination of the transition with probability $1 - \zeta$, while the  probabilities of reaching all the other possible destinations on taking action $a$ in $(s_i,q_i)$ are multiplied by $\zeta$.

\begin{example}
Fig.\ref{fig:8} shows the MDP $\mathcal{M}^{\zeta}$ obtained by modifying the product MDP $\mathcal{M}^{\times}$ in Fig.~\ref{fig:7} to add a new absorbing state $g$ and modified accepting transitions as described above. The new transitions in $\mathcal{M}^{\zeta}$ are marked by dot-dash lines. $\mathcal{M}^{\zeta}$  does not have an associated reward function and is only used to express the almost satisfaction of the given LTL formula. On the other hand, $\mathcal{M}^{\times}$ in Fig.~\ref{fig:7} is used to express the discounted-reward objective. 

\begin{figure}[h]
    \centering
    \includegraphics[scale=0.25]{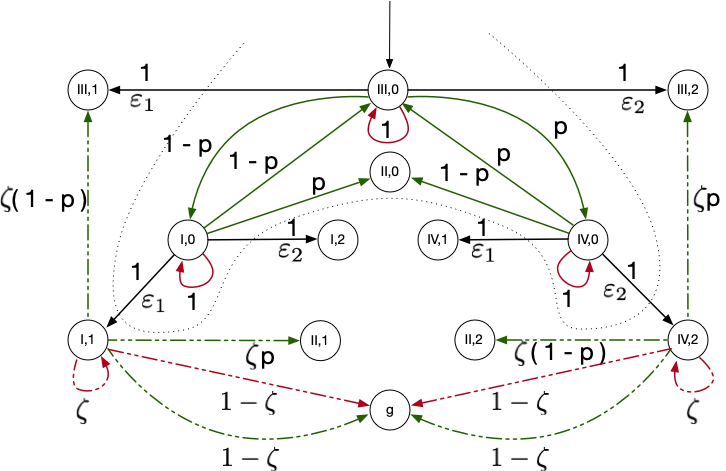}
    \caption{Modified product MDP $\mathcal{M}^{\zeta}$.}
    \label{fig:8}
\end{figure}
\end{example}

We observe that $\mathcal{M}^{\zeta}$ and  $\mathcal{M}^{\times}$ have a common set of states $S^{\times}$ except for the absorbing state $g$ in $S^{\zeta}$. Therefore, a single stationary policy can be defined for both $\mathcal{M}^{\zeta}$ and  $\mathcal{M}^{\times}$ over $S^{\times}$. Let $\rho_{\pi^{\times}}^{\zeta}(t)$ denote the probability of reaching $g$ in $\mathcal{M}^{\zeta}$ by following policy $\pi^{\times}$. The following lemmas relate the probability of reaching $g$ in $\mathcal{M}^{\zeta}$ with the probability of satisfying the LTL formula $\varphi$, i.e., of satisfying the associated B\"{u}chi acceptance condition in $\mathcal{M}^{\times}$. Lemma~\ref{lem:ineq} below follows from Lemma 1~\cite{hahn2019omega}.  

 \begin{lemma}\label{lem:ineq}
 For $\mathcal{M}$ and $\mathcal{M}^{\times}$ as defined in Lemma~\ref{lem:rwd}, let $\mathcal{M}^{\zeta} = (S^{\zeta}, A^{\zeta},P^{\zeta},s_0^{\zeta})$ be the modified absorbing product MDP.  For $\zeta \in (0,1)$ and a stationary policy $\pi^{\times}$ in $\mathcal{M}^{\zeta}$, we obtain $\rho_{\pi^{\times}}^{\zeta}(t) \geq Pr_{\pi^{\times}}^{\mathcal{M}^{\times}}(\varphi_{F^{\times}})$. Moreover, $\rho_{\pi^{\times}}^{\zeta}(t) = 1 $ implies $Pr_{\pi^{\times}}^{\mathcal{M}^{\times}}(\varphi_{F^{\times}}) = 1$. 
  \end{lemma}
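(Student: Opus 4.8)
The plan is to exploit a coupling between the runs of $\mathcal{M}^{\times}$ and $\mathcal{M}^{\zeta}$ together with the bottom-strongly-connected-component (BSCC) structure that a stationary policy induces on these finite chains. Since $\pi^{\times}$ is stationary, it induces a finite Markov chain on $S^{\times}$ (resp.\ $S^{\zeta}$), so almost every run eventually enters and remains in a BSCC; moreover, a run satisfies $\varphi_{F^{\times}}$ if and only if it enters a BSCC of $\mathcal{M}^{\times}$ containing an accepting transition, because every transition of a BSCC is taken infinitely often. I would couple the two chains so that they follow identical transitions until the first time a redirected accepting transition diverts the $\mathcal{M}^{\zeta}$ run to $g$; the only difference between the two dynamics is that, whenever a transition of $F^{\times}$ is about to be taken, the $\mathcal{M}^{\zeta}$ run jumps to $g$ with probability $1-\zeta$ and otherwise proceeds exactly as in $\mathcal{M}^{\times}$.

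For the inequality $\rho_{\pi^{\times}}^{\zeta}(t) \ge Pr_{\pi^{\times}}^{\mathcal{M}^{\times}}(\varphi_{F^{\times}})$, I would argue that every coupled run whose $\mathcal{M}^{\times}$ projection satisfies $\varphi_{F^{\times}}$ reaches $g$ almost surely. Indeed, such a projection takes some accepting transition infinitely often, and at each of these occurrences the $\mathcal{M}^{\zeta}$ run is diverted to $g$ independently with probability $1-\zeta>0$. The probability of never being diverted across the first $n$ such occurrences is $\zeta^{n}$, and $\lim_{n\to\infty}\zeta^{n}=0$, so conditional on satisfying $\varphi_{F^{\times}}$ the coupled run reaches $g$ with probability one. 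Integrating over all Büchi-satisfying runs shows that the measure of runs reaching $g$ is at least the measure of runs satisfying $\varphi_{F^{\times}}$, which gives the claim. The inequality can be strict because accepting transitions taken only finitely often during a transient phase still create diversions to $g$.

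For the implication $\rho_{\pi^{\times}}^{\zeta}(t)=1 \Rightarrow Pr_{\pi^{\times}}^{\mathcal{M}^{\times}}(\varphi_{F^{\times}})=1$, I would argue by contraposition. Suppose $Pr_{\pi^{\times}}^{\mathcal{M}^{\times}}(\varphi_{F^{\times}})<1$; then with positive probability the $\mathcal{M}^{\times}$ run ends in a BSCC $C$ that contains no accepting transition. Such a $C$ has no transition of $F^{\times}$ to redirect, so its internal dynamics are identical in $\mathcal{M}^{\zeta}$ and $C$ remains closed and absorbing there; once inside $C$ the run can never reach $g$. It remains to note that $C$ is still reached with positive probability in $\mathcal{M}^{\zeta}$: pick a finite path of positive probability from $s_0^{\times}$ into $C$ in $\mathcal{M}^{\times}$; since it contains only finitely many, say $k$, accepting transitions and each is followed rather than diverted with probability $\zeta>0$, the same path has probability at least $\zeta^{k}$ times its original value in $\mathcal{M}^{\zeta}$, hence positive. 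Therefore $\rho_{\pi^{\times}}^{\zeta}(t)<1$, completing the contrapositive.

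The step I expect to be the main obstacle is making the coupling precise enough to (i) correctly account for the transient accepting transitions that distinguish the two directions, which are exactly what forces the first statement to be an inequality rather than an equality, and (ii) justify that the BSCCs of $\mathcal{M}^{\times}$ without accepting transitions survive unchanged as absorbing classes of $\mathcal{M}^{\zeta}$, so that reaching $g$ is genuinely equivalent to landing in an accepting BSCC when the diversion probability is fixed at $1-\zeta$. Since the result is stated to follow from Lemma~1 of~\cite{hahn2019omega}, an alternative is simply to invoke that correspondence directly once the product constructions $\mathcal{M}^{\times}$ and $\mathcal{M}^{\zeta}$ are aligned with the objects used there.
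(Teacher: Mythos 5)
Your proof is correct, but it takes a genuinely different route from the paper: the paper offers no proof of this lemma at all, simply asserting that it ``follows from Lemma 1'' of Hahn~et~al.~\cite{hahn2019omega}, whereas you reconstruct a self-contained argument. Your two ingredients are exactly the right ones: (i) the coupling in which the $\mathcal{M}^{\zeta}$ run shadows the $\mathcal{M}^{\times}$ run and is diverted to $g$ by an independent coin of bias $1-\zeta$ at each (potential) accepting transition, which gives $\rho_{\pi^{\times}}^{\zeta} \geq Pr_{\pi^{\times}}^{\mathcal{M}^{\times}}(\varphi_{F^{\times}})$ since Büchi-satisfying runs flip infinitely many coins and $\zeta^{n}\to 0$; and (ii) the BSCC analysis for the contrapositive of the second claim, where a non-accepting BSCC of the induced finite chain is untouched by the $\zeta$-modification, remains closed in $\mathcal{M}^{\zeta}$, and is still reachable with probability at least $\zeta^{k}$ times the original path probability. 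What the paper's citation buys is brevity and a guarantee that the statement holds for the construction as given in the reference; what your argument buys is transparency, and it also implicitly verifies that the paper's slight variant of the construction (here \emph{all} destinations of an action possessing an accepting destination are scaled by $\zeta$, not just the accepting one) still supports both claims. One point to tighten if you write this up: the validity of the ``identical until diverted'' coupling rests on the fact that, conditional on not being diverted, the transition kernel of $\mathcal{M}^{\zeta}$ coincides with that of $\mathcal{M}^{\times}$; under the paper's uniform-scaling variant this is immediate, while under the per-accepting-transition variant you must flip the coin \emph{after} sampling the $\mathcal{M}^{\times}$ destination and only when the sampled triple lies in $F^{\times}$ --- your phrasing ``whenever a transition of $F^{\times}$ is about to be taken'' does this correctly, but the distinction is worth making explicit. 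Likewise, in the reachability step the exponent $k$ should count the steps at which the modification applies (state--action pairs with an accepting destination) rather than accepting transitions actually traversed; the bound $\zeta^{k}>0$ is unaffected.
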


In the case of almost sure satisfaction, Lemma~\ref{lem:ineq} reduces to the following result:
 \begin{lemma}\label{lem:reach}
 For $\mathcal{M}$ and $\mathcal{M}^{\times}$ as defined in Lemma~\ref{lem:rwd}, let $\mathcal{M}^{\zeta} = (S^{\zeta}, A^{\zeta},P^{\zeta},s_0^{\zeta})$ be the modified absorbing product MDP.
  For $\zeta \in (0,1)$ and a stationary policy $\pi^{\times}$ in $\mathcal{M}^{\zeta}$, $\rho_{\pi^{\times}}^{\zeta}(t) = 1 $ if and only if  $Pr_{\pi^{\times}}^{\mathcal{M}^{\times}}(\varphi_{F^{\times}}) = 1$.
 \end{lemma}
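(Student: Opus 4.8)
The plan is to obtain Lemma~\ref{lem:reach} directly from Lemma~\ref{lem:ineq}, treating the iff as the specialization of the more general inequality to the almost-sure regime. Recall that Lemma~\ref{lem:ineq} already supplies two facts about the same stationary policy $\pi^{\times}$ (well defined on both $\mathcal{M}^{\zeta}$ and $\mathcal{M}^{\times}$ because the two processes share the common state set $S^{\times}$): first, the pointwise bound $\rho_{\pi^{\times}}^{\zeta}(t) \geq Pr_{\pi^{\times}}^{\mathcal{M}^{\times}}(\varphi_{F^{\times}})$; and second, the implication that $\rho_{\pi^{\times}}^{\zeta}(t) = 1$ forces $Pr_{\pi^{\times}}^{\mathcal{M}^{\times}}(\varphi_{F^{\times}}) = 1$. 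I would handle the two directions of the biconditional separately, one using each fact.

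For the only-if direction, i.e.\ $\rho_{\pi^{\times}}^{\zeta}(t) = 1 \Rightarrow Pr_{\pi^{\times}}^{\mathcal{M}^{\times}}(\varphi_{F^{\times}}) = 1$, nothing new is needed: this is verbatim the second assertion of Lemma~\ref{lem:ineq}, so I would simply cite it. For the if direction, i.e.\ $Pr_{\pi^{\times}}^{\mathcal{M}^{\times}}(\varphi_{F^{\times}}) = 1 \Rightarrow \rho_{\pi^{\times}}^{\zeta}(t) = 1$, I would invoke the inequality: substituting $Pr_{\pi^{\times}}^{\mathcal{M}^{\times}}(\varphi_{F^{\times}}) = 1$ into $\rho_{\pi^{\times}}^{\zeta}(t) \geq Pr_{\pi^{\times}}^{\mathcal{M}^{\times}}(\varphi_{F^{\times}})$ yields $\rho_{\pi^{\times}}^{\zeta}(t) \geq 1$. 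Since $\rho_{\pi^{\times}}^{\zeta}(t)$ is a reachability probability and hence bounded above by $1$, the two inequalities sandwich it to give exactly $\rho_{\pi^{\times}}^{\zeta}(t) = 1$.

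The only points requiring any care are bookkeeping rather than genuine obstacles. I would make explicit that both $\rho_{\pi^{\times}}^{\zeta}(t)$ and $Pr_{\pi^{\times}}^{\mathcal{M}^{\times}}(\varphi_{F^{\times}})$ are honest probabilities in $[0,1]$, so that the upper bound $\rho_{\pi^{\times}}^{\zeta}(t) \leq 1$ used in the sandwiching step is legitimate, and that the construction of $\mathcal{M}^{\zeta}$ from $\mathcal{M}^{\times}$ (adding the absorbing state $g$ and rescaling accepting transitions by $\zeta$) leaves the policy $\pi^{\times}$ applicable unchanged on $S^{\times}$, so the two quantities are evaluated under the \emph{same} policy. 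All the probabilistic substance—relating reachability of $g$ to the B\"{u}chi acceptance condition—has already been absorbed into Lemma~\ref{lem:ineq} (itself inherited from Lemma~1 of~\cite{hahn2019omega}); consequently I expect this proof to be short, with the main ``difficulty'' being merely to state clearly which half of Lemma~\ref{lem:ineq} supplies each implication.
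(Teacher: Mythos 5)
Your proposal is correct and matches the paper's own proof: both directions are obtained from Lemma~\ref{lem:ineq}, with the forward implication cited verbatim and the converse obtained by combining the inequality $\rho_{\pi^{\times}}^{\zeta}(t) \geq Pr_{\pi^{\times}}^{\mathcal{M}^{\times}}(\varphi_{F^{\times}})$ with the trivial bound $\rho_{\pi^{\times}}^{\zeta}(t) \leq 1$. Your write-up is in fact slightly more explicit than the paper, which dismisses the sandwiching step as ``obvious.''
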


\begin{proof}
 If $\pi^{\times}$ is such that $\rho_{\pi^{\times}}^{\zeta}(t) = 1 $, then  $Pr_{\pi^{\times}}^{\mathcal{M}^{\times}}(\varphi_{F^{\times}}) = 1$ by Lemma \ref{lem:ineq}. Also, it is obvious from Lemma \ref{lem:ineq} that $Pr_{\pi^{\times}}^{\mathcal{M}^{\times}}(\varphi_{F^{\times}}) = 1$ implies $\rho_{\pi^{\times}}^{\zeta}(t) = 1 $. 
 \qed
\end{proof}

We conclude that the set of stationary deterministic policies of $\mathcal{M}^{\times}$ which satisfy the B\"{u}chi acceptance condition almost surely is equal to the set of stationary deterministic policies of $\mathcal{M}^{\zeta}$ for which the absorbing state $g$ is reached almost surely.
 
\subsection{Occupancy Measures}\label{sec:OM}

From Lemmas~\ref{lem:buch} and~\ref{lem:reach}, almost sure satisfaction of an LTL constraint on $\mathcal{M}$ can be expressed as an almost sure  reachability constraint on $\mathcal{M}^{\zeta}$. 
However, this reachability constraint can also be expressed in terms of the occupancy measure $x_{sa}$ defined on the absorbing MDP $\mathcal{M}^{\zeta}$ as discussed Section~\ref{sec:occu_measures}:
\begin{equation}\label{occx}
    \begin{aligned}
     &x_{sa} \geq 0  \quad \forall s\in S^{\times} \ \forall a \in A^{\times}(s),\\
    &out^{x}(s) - in^{x}(s) =  \mathds{1}_{s_0^{\times}}(s) \quad \forall s\in S^{\times},\\
    &in^{x}(g) = 1 \quad \quad \quad g \in S^{\zeta},\\
    &out^{x}(s) = \sum_{a\in A^{\times}(s)}x_{sa} \quad \quad \forall s\in S^{\times}, \\
    &in^{x}(s) = \sum_{j\in S^{\times},a \in A^{\times}(j)} x_{ja}P^{\zeta}(j,a,s) \quad \forall s\in S^{\zeta}.
    \end{aligned}
\end{equation}

Similarly, the discounted reward optimization problem on $\mathcal{M}^{\times}$ can be expressed in terms of an occupancy measure $y_{sa}$ $\mathcal{M}^{\times}$ as follows:
\begin{equation}\label{occy}
    \begin{aligned}
    &y_{sa} \geq 0  \quad \forall s\in S^{\times} \ \forall a \in A^{\times}(s),\\
    &out^{y}(s) - in^{y}(s) =  \mathds{1}_{ s_0^{\times}}(s) \quad \forall s\in S^{\times},\\
    &out^{y}(s) = \sum_{a\in A^{\times}(s)}y_{sa}, \quad \quad \forall s\in S^{\times},\\
    &in^{y}(s) = \gamma\sum_{j\in S^{\times},a \in A^{\times}(j)} y_{ja}P^{\times}(j,a,s) \quad \forall s\in S^{\times}.
    \end{aligned}
\end{equation}
while the expected discounted reward under policy $\pi^{\times}$ with occupancy measure $y_{sa}$ is ${\mathcal R}_{\mathcal{M}^{\times}}({\pi}^{\times}) =  \sum_{s \in S^{\times}, a \in A^{\times}(s)}  r^{\times}(s,a) y_{sa}.$

Finally, we need to ensure that both occupancy measures originate from the same underlying policy $\pi^\times$, which maximizes the reward and satisfies the LTL specification. We then require that $ {x_{sa}}/{\sum_{a^{'} \in A^{\times}(s)} x_{sa^{'}}} = {y_{sa}}/{\sum_{a^{'} \in A^{\times}(s)} y_{sa^{'}}},$
for all $a \in A^{\times}(s)$ and $s\in S^{\times}$. We achieve this equality by introducing a set of binary variables and constraints to restrict the search to the space of deterministic policies as detailed in the next section.

\subsection{Mixed Integer Linear Program Formulation}\label{sec:delta}

For all the states shared by $\mathcal{M}^{\times}$ and $\mathcal{M}^{\zeta}$ and the associated actions, i.e., for all $s \in S^{\times}$ and $a \in A^{\times}(s)$, we introduce binary variables $\Delta_{sa} \in \{0,1 \}$ that evaluate to $1$ if and only if action $a$ is selected in visited state $s$. 
%
%
By using these binary variables, we can require that a policy is deterministic via the following constraint:
\begin{equation}\label{deter}
    \sum_{a} \Delta_{sa} \leq 1 \quad \forall s \in S^{\times},
\end{equation}
meaning that at most one action $a$ can be selected in state $s$. 
We can then require the equivalence between the occupancy measures defined in Section~\ref{sec:OM} using the following logical constraints:
%
%
\begin{subequations}\label{g_zero}
 \begin{align}
& (\Delta_{sa} = 0) \to (x_{sa} = 0) \quad \forall s \in S^{\times}, \ 
\forall a \in A^{\times}, \label{gxzero}\\
& (\Delta_{sa} = 0) \to (y_{sa} = 0) \quad  \forall s \in S^{\times}, \ \forall a \in A^{\times}, \label{gyzero} 
\end{align}  
\end{subequations}
where $\to$ denotes the logical implication. These constraints can be converted into mixed integer linear constraints using standard techniques~\cite{Winston04}. A similar idea was previously used to generate deterministic policies for constrained MDPs~\cite{dolgov2005stationary}.

The policies generated by the occupancy measures are deterministic under constraints \eqref{deter}-\eqref{g_zero} as stated by the following lemmas.

\begin{lemma}\label{deterproof1}
Given a discounted MDP $\mathcal{M}^{\times}$, a stationary policy $\pi^\times$, the associated occupancy measure $y_{sa}$, binary variables $\Delta_{sa}$, $\forall s \in S^{\times}, \ \forall a \in A^{\times}$, if $y$ and $\Delta$ satisfy equations~\eqref{deter}-\eqref{g_zero}, then for all states $s \in S^{\times}$ visited with non-zero probability, policy $\pi^{\times}$ is deterministic and, further, $y_{sa} > 0$ if and only if $\Delta_{sa} = 1$. 
\end{lemma}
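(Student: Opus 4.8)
The plan is to translate the phrase ``$s$ is visited with non-zero probability'' into the algebraic condition $out^{y}(s) = \sum_{a \in A^{\times}(s)} y_{sa} > 0$. This is justified by the defining expression $y_{sa} = \sum_{t=0}^{\infty}\gamma^{t}P_{\pi^{\times}}(S_t=s,A_t=a)$ together with $\gamma \in (0,1)$: the discounted occupancy of $s$ is strictly positive exactly when $s$ is reached with positive probability at some time step. Fix such a state $s$. Since $\sum_{a}y_{sa}>0$ and each $y_{sa}\ge 0$ by~\eqref{occy}, there exists at least one action $a^{\star}$ with $y_{sa^{\star}}>0$.

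First I would invoke the contrapositive of~\eqref{gyzero}: because $y_{sa^{\star}}>0$, we cannot have $\Delta_{sa^{\star}}=0$, hence $\Delta_{sa^{\star}}=1$. Next I would apply the determinism constraint~\eqref{deter}, $\sum_{a}\Delta_{sa}\le 1$, which forces $\Delta_{sa}=0$ for every $a\neq a^{\star}$. Feeding this back into~\eqref{gyzero} in the forward direction yields $y_{sa}=0$ for all $a\neq a^{\star}$. Consequently the entire outflow mass at $s$ is concentrated on $a^{\star}$, so the induced policy $\pi^{\times}(s,a)=y_{sa}/\sum_{a'}y_{sa'}$ assigns probability $1$ to $a^{\star}$ and $0$ to every other action; that is, $\pi^{\times}$ is deterministic at $s$. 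The equivalence $y_{sa}>0 \iff \Delta_{sa}=1$ then follows: the forward implication is the contrapositive of~\eqref{gyzero}, while for the converse the unique action with $\Delta_{sa}=1$ is precisely $a^{\star}$, which satisfies $y_{sa^{\star}}>0$.

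I expect the only delicate step to be the converse implication $\Delta_{sa}=1 \Rightarrow y_{sa}>0$, which is not a direct consequence of any single constraint in~\eqref{g_zero}: constraints~\eqref{gxzero}--\eqref{gyzero} only rule out positive occupancy when $\Delta=0$, so a priori an action could be ``selected'' ($\Delta_{sa}=1$) yet carry zero flow. The argument closes this gap by combining the existence of a positively occupied action (guaranteed because $s$ is visited) with the uniqueness enforced by~\eqref{deter}, so that the single selected action must coincide with the positively occupied one. It is therefore essential that the claim is restricted to states visited with non-zero probability; without that hypothesis the converse could fail, since an unvisited state places no constraint relating $\Delta$ and $y$.
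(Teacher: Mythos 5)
Your proof is correct and follows essentially the same chain of implications as the paper: extract an action $a^{\star}$ with $y_{sa^{\star}}>0$ at a visited state, use the contrapositive of~\eqref{gyzero} to force $\Delta_{sa^{\star}}=1$, use~\eqref{deter} to zero out the other $\Delta$'s, and feed back through~\eqref{gyzero} to concentrate all occupancy on $a^{\star}$. If anything, your treatment is slightly more careful than the paper's, since you explicitly justify the translation of ``visited with non-zero probability'' into $\sum_{a}y_{sa}>0$ and explicitly close the converse implication $\Delta_{sa}=1 \Rightarrow y_{sa}>0$, which the paper asserts without comment.
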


\begin{proof}
Consider a state $s^* \in S^{\times}$ which is visited with non-zero probability under policy $\pi^{\times}$. We obtain the following chain of implications: 
\begin{align*}
    & \sum_{a \in A^{\times}(s^*)} y_{s^{*}a} > 0 \implies \exists a^* \in A^{\times}(s^*) \ \text{s.t.} \ y_{s^{*}a^{*}} > 0 \\
    & \implies \Delta_{s^{*}a^{*}} = 1 \text{ by \eqref{gyzero}} \\
    & \implies \Delta_{s^{*}a} = 0 \quad  \forall a \neq a^{*} \text{ by \eqref{deter}} \\
    & \implies y_{s^{*}a} = 0 \quad  \forall a \neq a^{*} \text{ by \eqref{gyzero}}\\
    & \implies \pi^{\times}(s^{*},a^{*}) = 1 \quad \text{and} \quad \pi^{\times}(s^{*},a) = 0 \quad  \forall a \neq a^{*}  \\
    & \implies \pi^{\times} \text{ is deterministic.}
\end{align*}
Therefore, if $s$ is visited with non-zero probability, $y_{sa} > 0$ if and only if $\Delta_{sa} = 1$. 
\qed
\end{proof}

Similarly, the following lemma can be stated for the absorbing MDP $\mathcal{M}^{\zeta}$.
 
\begin{lemma}\label{deterproof2}
Given an absorbing MDP $\mathcal{M}^{\zeta}$, a stationary policy $\pi^{\times}$, the associated occupancy measure $x_{sa}$, binary variables $\Delta_{sa}$, $\forall s \in S^{\times}, \ \forall a \in A^{\times}$, if $x$ and $\Delta$ satisfy equations~\eqref{deter}-\eqref{g_zero}, then for all states $s \in S^{\zeta} \setminus \{g\}$ visited with non-zero probability, policy $\pi^{\times}$ is deterministic and, further, $x_{sa} > 0$ if and only if $\Delta_{sa} = 1$.
\end{lemma}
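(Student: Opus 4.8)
The plan is to replicate the argument of Lemma~\ref{deterproof1} almost verbatim, since constraints~\eqref{deter} and~\eqref{g_zero} couple the binary variables $\Delta_{sa}$ symmetrically to both occupancy measures: \eqref{gxzero} plays for $x$ exactly the role that \eqref{gyzero} plays for $y$. The only ingredient genuinely specific to the absorbing setting is the correspondence between positivity of the occupancy measure and positive visitation probability, so I would isolate and verify that first. By the definition of the absorbing occupancy measure, $x_{sa} = \sum_{t=0}^{\infty} P_{\pi^{\times}}(S_t = s, A_t = a)$ is a sum of non-negative terms accumulated along the trajectory before absorption in $g$. Hence, for any $s \in S^{\zeta}\setminus\{g\}$, we have $\sum_{a \in A^{\times}(s)} x_{sa} > 0$ if and only if $s$ is visited with non-zero probability under $\pi^{\times}$; finiteness of this sum is guaranteed because $g$ is reached almost surely, which is precisely what makes $x$ a well-defined occupancy measure satisfying~\eqref{occx}.

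With this correspondence in hand, I would fix a state $s^* \in S^{\zeta}\setminus\{g\}$ visited with non-zero probability and run the same chain of implications as in Lemma~\ref{deterproof1}. From $\sum_{a} x_{s^*a} > 0$ there exists $a^*$ with $x_{s^*a^*} > 0$; the contrapositive of~\eqref{gxzero} gives $\Delta_{s^*a^*} = 1$; constraint~\eqref{deter} then forces $\Delta_{s^*a} = 0$ for every $a \neq a^*$; and applying~\eqref{gxzero} once more yields $x_{s^*a} = 0$ for all $a \neq a^*$. Since the stationary policy induced by $x$ assigns $\pi^{\times}(s^*, a) = x_{s^*a} / \sum_{a'} x_{s^*a'}$, all the probability mass concentrates on $a^*$, so $\pi^{\times}$ is deterministic at $s^*$ and $x_{s^*a} > 0$ holds exactly when $\Delta_{s^*a} = 1$. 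As $s^*$ was an arbitrary state visited with positive probability, the claim follows.

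I expect no real obstacle beyond bookkeeping of the domain: the chain of implications is mechanical once the measure-positivity/visitation equivalence is in place. The one point deserving care — and the reason the statement restricts to $s \in S^{\zeta}\setminus\{g\}$ — is that $x$ is only defined on $S^{\times} = S^{\zeta}\setminus\{g\}$, so the absorbing state $g$, which contributes no outgoing state–action mass to $x$, must simply be excluded from the determinism conclusion. Everything else is a direct transcription of the discounted case, with $y$ replaced by $x$ and \eqref{gyzero} replaced by \eqref{gxzero}.
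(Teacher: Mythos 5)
Your proof is correct and matches the paper's intent exactly: the paper gives no separate argument for this lemma, stating only that it follows ``similarly'' to Lemma~\ref{deterproof1}, and your proof is precisely that transcription, replacing $y$ with $x$ and \eqref{gyzero} with \eqref{gxzero}. Your explicit verification that positive visitation probability is equivalent to $\sum_{a} x_{sa} > 0$ (via almost-sure absorption in $g$) is a detail the paper leaves implicit, and it is handled correctly.
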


The following lemma shows that under constraints~\eqref{deter}-\eqref{g_zero}, the set of reachable states in $S^{\times}$ under policies given by $x$ and $y$ are the same and have the same deterministic action choices on those states.

\begin{lemma}\label{reach}
Given an MDP $\mathcal{M}^{\times}$, a stationary policy $\pi^y$, the associated occupancy measure $y_{sa}$, the absorbing MDP $\mathcal{M}^{\zeta}$ generated from $\mathcal{M}^{\times}$, a stationary policy $\pi^x$, the associated occupancy measure $x_{sa}$, binary variables $\Delta_{sa}$, $\forall s \in S^{\times}, \ \forall a \in A^{\times}$, if $x, y$ and $\Delta$ satisfy equations~\eqref{deter}-\eqref{g_zero}, then the set of reachable states in $S^{\times}$ under $\pi_x$ and $\pi_y$ and the deterministic action choices on those states are the same. 
\end{lemma}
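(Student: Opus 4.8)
The plan is to prove the two claims — equality of the reachable sets and equality of the deterministic action choices — simultaneously by an induction on reachability from the initial state $s_0^{\times}$, exploiting the fact that both $\pi^x$ and $\pi^y$ are pinned down by the \emph{same} binary variables $\Delta_{sa}$.

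First I would record a purely structural fact about the two transition functions. Since $\zeta \in (0,1)$, the construction of $\mathcal{M}^{\zeta}$ from $\mathcal{M}^{\times}$ only redirects a fraction of the mass of each accepting transition to the fresh absorbing state $g$ and rescales the remaining destinations by the positive factor $\zeta$; it neither removes any destination in $S^{\times}$ nor creates a new one. Hence, for every $s \in S^{\times}$ and $a \in A^{\times}(s)$,
\begin{equation*}
\{\, s' \in S^{\times} : P^{\zeta}(s,a,s') > 0 \,\} = \{\, s' \in S^{\times} : P^{\times}(s,a,s') > 0 \,\}.
\end{equation*}
In words, taking any action in any state leads to exactly the same set of $S^{\times}$-successors in both MDPs. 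This is the property that will let reachability propagate identically.

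Next I would set up the induction. For the base case $s_0^{\times}$, the flow constraints in~\eqref{occx} and~\eqref{occy} force $out^{x}(s_0^{\times}) \geq 1$ and $out^{y}(s_0^{\times}) \geq 1$, so $\sum_a x_{s_0^{\times}a} > 0$ and $\sum_a y_{s_0^{\times}a} > 0$; thus $s_0^{\times}$ is reachable under both policies, and by Lemmas~\ref{deterproof1} and~\ref{deterproof2} together with~\eqref{deter} there is a \emph{unique} action $a^*$ with $\Delta_{s_0^{\times}a^*}=1$, which both $\pi^x$ and $\pi^y$ select. For the inductive step, I would assume a state $s \in S^{\times}$ is reachable under both policies and that both select the common action $a^*$ (the one with $\Delta_{sa^*}=1$). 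By the structural fact above, the $S^{\times}$-successors of $s$ under $a^*$ are identical in $\mathcal{M}^{\times}$ and $\mathcal{M}^{\zeta}$, so each such successor $s'$ is reachable under both $\pi^x$ and $\pi^y$. Reachability of $s'$ then gives $\sum_a y_{s'a} > 0$ and $\sum_a x_{s'a} > 0$, so Lemmas~\ref{deterproof1},~\ref{deterproof2} and~\eqref{deter} again yield a single common action chosen by both policies at $s'$. This closes the induction.

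The main delicacy — and the step I would treat most carefully — is the one-directional nature of the implications in~\eqref{g_zero}: they guarantee $x_{sa}>0 \Rightarrow \Delta_{sa}=1$ and $y_{sa}>0 \Rightarrow \Delta_{sa}=1$, but $\Delta_{sa}=1$ alone does \emph{not} force the occupancy to be positive. This is precisely why the argument must be a joint induction that establishes reachability under both policies in lock-step, rather than reading off the action choice from $\Delta$ at an a priori possibly-unreachable state; the common-successor property is what keeps the two reachable sets from ever diverging. Once the induction is complete, the states reached from $s_0^{\times}$ are generated by the same transition rule in both cases, so the reachable subsets of $S^{\times}$ coincide, and the deterministic action selected at each reachable state is the common $a^*$ with $\Delta_{sa^*}=1$, as claimed.
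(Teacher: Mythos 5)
Your proof is correct and follows essentially the same route as the paper's: a joint induction on reachability starting from the common initial state $s_0^{\times}$, invoking Lemmas~\ref{deterproof1} and~\ref{deterproof2} at each reachable state to extract the common action $a^*$ with $\Delta_{sa^*}=1$, and using the fact that $P^{\zeta}$ only rescales $S^{\times}$-transition probabilities by nonzero constants so the successor sets coincide. Your write-up is in fact somewhat more explicit than the paper's (the displayed successor-set identity, the flow-constraint argument for the base case, and the remark on the one-directionality of~\eqref{g_zero}), but the underlying argument is the same.
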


\begin{proof}
$\mathcal{M}^{\times}$ and $\mathcal{M}^{\zeta}$ have the same initial state $(s_0,q_0)$, which is clearly visited for both the MDPs. Therefore, from the proof of Lemma \ref{deterproof1},  $ \exists a^{*} \text{ such that } \Delta_{(s_0,q_0),a^{*}} = 1$ and  $\Delta_{(s_0,q_0),a} = 0 \ \forall a \neq a^{*}$. From Lemma~\ref{deterproof1} and Lemma~\ref{deterproof2}, we also have $x_{(s_0,q_0),a^*} > 0, y_{(s_0,q_0),a^*} > 0$ as well as  
$x_{(s_0,q_0),a} = 0, y_{(s_0,q_0),a} = 0, \forall a \neq a^{*}$. Therefore, both the occupancy measures provide the same action $a^*$ in the initial state.

The set of states reachable in $S^{\times}$ after one step, that is, after selecting $a^*$ in the initial state, is the same for both $\mathcal{M}^{\times}$ and $\mathcal{M}^{\zeta}$ by the construction of the transition probabilities $P^{\zeta}$ from $P^{\times}$. In fact, the transition probabilities between states of $S^{\times}$ are scaled by non-zero constants, which does not change their reachability. Similarly to the argument above, both occupancy measures propose the same actions in all the states reachable in $S^{\times}$ after one step. We can then proceed by induction to conclude that the set of reachable states in $S^{\times}$ for  $\mathcal{M}^{\times}$ and $\mathcal{M}^{\zeta}$ are the same for $n$ steps, $n \geq 0$, and both the occupancy measures determine the same actions in all the reachable states in $S^{\times}$.  \qed
\end{proof}

By putting together constraints~\eqref{occx}-\eqref{g_zero} in terms of the binary variables, occupancy measures, and the expected discounted reward objective, we obtain the following MILP problem whose solution provides the desired policy:

\allowdisplaybreaks\begin{align}\label{milp}
 &  \underset{x,y,\Delta\in \{0,1\}}{\text{ max }} \quad   \sum_{s \in S^{\times},a \in A^{\times}(s)}  r^{\times}(s,a) y_{sa},   \\ 
   \mathrm{s.t.} \quad  &x_{sa} \geq 0, \quad  y_{sa} \geq 0 \quad \forall s\in S^{\times} \ \forall a \in A^{\times}(s), \nonumber \\
   &out^{y}(s) - in^{y}(s) =  \mathds{1}_{ s_0^{\times}}(s) \quad \forall s\in S^{\times},  \nonumber  \\ 
   &out^{x}(s) - in^{x}(s) =  \mathds{1}_{s_0^{\times}}(s) \quad \forall s\in S^{\times}, \nonumber  \\
   &in^{x}(g) = 1 \quad \quad \quad g \in S^{\zeta}, \nonumber \\
   & \sum_{a \in A^{\times}(s)} \Delta_{sa} \leq 1 \quad \forall s \in S^{\times}, \nonumber  \\
   & (\Delta_{sa} = 0) \to (x_{sa} = 0) \quad \forall s \in S^{\times} \ \forall a \in A^{\times}(s), \nonumber \\
 & (\Delta_{sa} = 0) \to (y_{sa} = 0) \quad \forall s \in S^{\times} \ \forall a \in A^{\times}(s),  \nonumber \\
 &out^{y}(s) = \sum_{a\in A^{\times}(s)}y_{sa}, \quad \quad \forall s\in S^{\times},\nonumber \\
 &in^{y}(s) = \gamma\sum_{j\in S^{\times},a \in A^{\times}(j)} y_{ja}P^{\times}(j,a,s) \quad \forall s\in S^{\times},\nonumber \\
 &out^{x}(s) = \sum_{a\in A^{\times}(s)}x_{sa} \quad \quad \forall s\in S^{\times}, \nonumber \\
 &in^{x}(s) = \sum_{j\in S^{\times},a \in A^{\times}(j)} x_{ja}P^{\zeta}(j,a,s) \quad \forall s\in S^{\zeta}. \nonumber
\end{align}

The main result of this section can then be  summarized by the following theorem.

\begin{theorem}
Under Assumption~\ref{asm}, Problem~\eqref{milp} is feasible and provides an optimal deterministic policy for the formulated Problem~\eqref{eq:prob2}. 
Specifically, if $\Delta_{sa} = 1$, then the optimal deterministic policy requires to select action $a$ on visiting state $s$. 
\end{theorem}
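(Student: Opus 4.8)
The plan is to establish optimality by showing that the feasible region of Problem~\eqref{milp}, once projected onto policies, coincides with the feasible set of Problem~\eqref{eq:prob2}, and that along this correspondence the MILP objective $\sum_{s,a} r^{\times}(s,a)y_{sa}$ equals the expected discounted reward ${\mathcal R}_{\mathcal{M}^{\times}}(\pi^{\times})$. If both feasibility and value-matching hold in both directions, then a maximizer of the MILP is a maximizer of Problem~\eqref{eq:prob2} and conversely, which is exactly what the theorem asserts. I would therefore prove two implications---completeness (every admissible policy yields a MILP-feasible point of equal value) and soundness (every MILP-feasible point yields an admissible deterministic policy of equal value)---and then combine them.

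For completeness, and hence feasibility of the MILP, I would start from Assumption~\ref{asm}: a policy satisfying $\varphi$ with probability one exists, so by Lemma~\ref{lem:buch} there is a stationary policy on $\mathcal{M}^{\times}$ with $Pr^{\mathcal{M}^{\times}}_{\pi^{\times}}(\varphi_{F^{\times}})=1$, and since this is equivalent to almost sure reachability of $g$ in $\mathcal{M}^{\zeta}$ by Lemma~\ref{lem:reach}, a \emph{deterministic} such policy $\hat{\pi}$ is guaranteed to exist. From $\hat{\pi}$ I would build the discounted occupancy measure $y$ on $\mathcal{M}^{\times}$ and the reachability occupancy measure $x$ on $\mathcal{M}^{\zeta}$, the latter finite because $g$ is reached almost surely, and set $\Delta_{sa}=1$ exactly when $\hat{\pi}$ selects $a$ in $s$. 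By the characterizations~\eqref{occx} and~\eqref{occy} these measures satisfy the flow constraints, while determinism of $\hat{\pi}$ gives~\eqref{deter} and the fact that $x_{sa}=y_{sa}=0$ whenever $\Delta_{sa}=0$ gives~\eqref{g_zero}; thus $(x,y,\Delta)$ is MILP-feasible with objective ${\mathcal R}_{\mathcal{M}^{\times}}(\hat{\pi})$.

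For soundness, I would take any MILP-feasible $(x,y,\Delta)$ and invoke Lemmas~\ref{deterproof1},~\ref{deterproof2}, and~\ref{reach}: these guarantee that, on the reachable states, the policy $\pi_y$ read off from $y$ and the policy $\pi_x$ read off from $x$ coincide and are deterministic, selecting the unique $a$ with $\Delta_{sa}=1$. Calling this common policy $\pi^{\times}$, the key step is to argue that $y$ is genuinely the $\mathcal{M}^{\times}$-occupancy measure of $\pi^{\times}$ and $x$ is genuinely the $\mathcal{M}^{\zeta}$-occupancy measure of $\pi^{\times}$; this holds because occupancy mass is supported only on reachable state--action pairs, where the two policies already agree, so the arbitrary choices on unreachable states are immaterial. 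Consequently $\sum_{s,a} r^{\times}(s,a)y_{sa} = {\mathcal R}_{\mathcal{M}^{\times}}(\pi^{\times})$, and $in^{x}(g)=1$ forces $\rho^{\zeta}_{\pi^{\times}}(t)=1$, which by Lemma~\ref{lem:reach} yields $Pr^{\mathcal{M}^{\times}}_{\pi^{\times}}(\varphi_{F^{\times}})=1$; hence $\pi^{\times}$ is admissible for Problem~\eqref{eq:prob2} with matching reward.

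Combining the two directions, the optimal value of the MILP equals that of Problem~\eqref{eq:prob2}, and an optimal $(x^{*},y^{*},\Delta^{*})$ induces an optimal deterministic policy that selects $a$ in $s$ whenever $\Delta^{*}_{sa}=1$, establishing the final clause. I expect the main obstacle to be precisely the soundness step above: reconciling the two separately derived occupancy measures $x$ and $y$ into one consistent policy and certifying that each is that single policy's occupancy measure on its respective MDP. The argument hinges on showing that agreement on reachable states, as provided by Lemma~\ref{reach}, suffices, because unreachable states carry zero occupancy and therefore affect neither the reward sum computed from $y$ nor the reachability certificate $in^{x}(g)=1$ computed from $x$.
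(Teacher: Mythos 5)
Your proposal is correct and follows essentially the same route as the paper: feasibility of the MILP via Assumption~\ref{asm} together with Lemmas~\ref{lem:buch} and~\ref{lem:reach}, and extraction of a single consistent deterministic policy from the two occupancy measures via Lemmas~\ref{deterproof1}, \ref{deterproof2}, and~\ref{reach}. The paper's own proof text only states the feasibility half explicitly and leaves the soundness direction (that any MILP-feasible point yields an admissible policy whose value matches the objective, with unreachable states carrying no occupancy mass) implicit in those lemmas, so your write-up is, if anything, the more complete rendering of the same argument.
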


Problem~\eqref{milp} is feasible since a deterministic policy which reaches the absorbing state $g$ with probability 1 exists due to Assumption~\ref{asm} and  Lemmas~\ref{lem:buch} and~\ref{lem:reach}. This deterministic policy on $\mathcal{M}^{\times}$ translates into a a finite-memory policy on $\mathcal{M}$ as discussed in Section~\ref{sec:prodMDP}. 

\subsection{Incorporating Additional Expected Discounted-Reward Constraints}

In certain applications, it may be necessary to express additional constraints that can be captured in terms of lower bounds to  expected total discounted returns $d_i$, $ i = 1,\ldots, n$, with respect to reward functions $r_i$ and discount factors $\gamma_i$ which are different from the primary function $r$ and discount factor $\gamma$, respectively. These constraints of the form   
$${\mathbb E}_{\pi} \left[\sum_{t=0}^{\infty} { \gamma_i }^{t} r_{i}(s_t,a_t) \right] > d_i, \quad i = 1,\ldots,n$$ 
can be seamlessly incorporated within our formulation, by defining additional occupancy measures  $y^{i}$, associated with $r_i,\gamma_i$, $ i = 1,\ldots,n$, as 
$$\sum_{s \in S^{\times},a \in A^{\times}(s)}  r^{\times}(s,a) y^{i}_{sa}  > d_i \quad  i = 1 \ldots n $$
along with the necessary and sufficient conditions for $y^{i}$ to be an occupancy measure as in~\eqref{occy}. Finally, we ensure that the occupancy measures originate from the same underlying policy by the following additional constraints, similar to the ones in Section~\ref{sec:delta}:
$$  (\Delta_{sa} = 0) \to (y^{i}_{sa} = 0) \quad \forall s \in S^{\times} \ \forall a \in A^{\times}(s).$$

\section{Experimental Results} \label{sec:experiment}

We implemented the MILP-based framework in \textsc{Python}, using \textsc{Rabinizer 4}~\cite{kvretinsky2018rabinizer} to generate an equivalent LDBA from an LTL formula and \textsc{Gurobi} for solving the MILPs. We evaluate the framework on two case studies involving motion planning of a mobile robot. The experiments are run on a $1.4$-GHz Core i5 processor with $16$-GB memory.

\subsection{Safe Motion Planning}

We first consider a simple grid-world MDP that extends the MDP in Example~\ref{example}, as shown in Fig.~\ref{fig:1}. Starting from initial state $0$, the robot tries to reach and remain in one of the safe cells (labeled with $l_0$ and $l_1$) while avoiding the unsafe cells (labeled with $m$). This task is formally specified by the LTL formula

$$(\textbf{F}\textbf{G} l_0 \vee \textbf{FG} l_1) \wedge \textbf{G} \neg m.$$ 
%
\begin{figure}[t]
    \centering
    \includegraphics[scale=0.2]{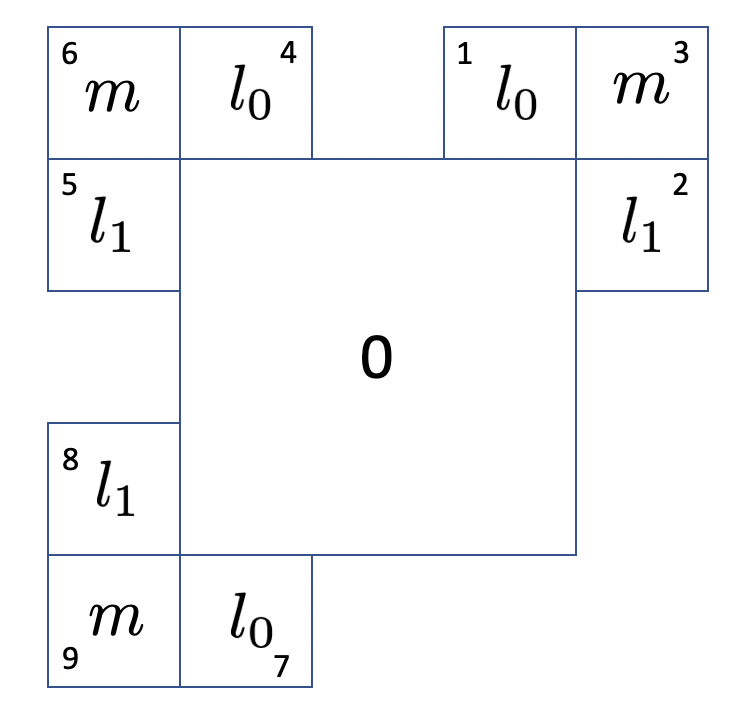}
    \caption{A grid-world MDP case study.}
    \label{fig:1}
\end{figure}
The action $\mathsf{rest}$ which does not change the MDP state is available to the robot in every location. Further, in the initial state $0$, the robot has three more actions available, namely, $\mathsf{ur}$ (upper right), $\mathsf{ul}$ (upper left), and $\mathsf{ll}$ (lower left) such that for $p > 0$,
\begin{equation*}
\begin{aligned}
    & P(0,\mathsf{ur},1) = p, &    P(0,\mathsf{ur},2) = 1 - p,\\
    & P(0,\mathsf{ul},4) = p, &   P(0,\mathsf{ul},5) = 1 - p,\\
    & P(0,\mathsf{ll},7) = p, &   P(0,\mathsf{ll},9) = 1 - p
\end{aligned}
\end{equation*}
In the rest of the cells, the robot can also take action $\mathsf{move}$, by which the robot moves to the vertically adjoining cell with probability $p$ and to the horizontally adjoining cell with probability $1-p$.

Clearly, the deterministic policies which satisfy the LTL formula are of the following form: take an action $a \in \{ \mathsf{ur},\mathsf{ul},\mathsf{ll} \}$ when in state $0$, and action $\mathsf{rest}$ in the other visited states. A reward function $r(s,a)$ gives rise to the expected discounted reward creating an ordering between the above described policies.

The LDBA generated from the LTL formula and the product MDP included 3 states and 30 states, respectively. The resulting MILP, including $173$ continuous and $86$ binary variables required less than $100$-ms runtime for  various values of the parameter $p$, discount factor $\gamma$, and reward function $r(s,a)$. For all choices of the parameters, the generated MILP gives a policy of the form described above, which chooses to move to the quadrant offering  maximum discounted reward.

\subsection{Nursery Scenario}

In this grid-world example, the robot can take actions in the set $\{\mathsf{up, down, left, right}\}$ at each cell of a grid, as shown in Fig.~\ref{fig:2} and Fig.~\ref{fig:3}. On taking each action, the robot moves in the intended direction with probability $0.8$ and sideways with total probability $0.2$, equally divided between the two directions (i.e., a probability of $0.1$ is allocated for moving in each side direction). If the robot is unable to move in a direction due to the presence of a wall, it remains in the same cell. 

An adult, a baby, a charger, and a danger zone are present in four distinct cells of the grid, whose locations are indicated by the truth values of the atomic propositions $a$, $b$, $c$, and $d$,  respectively, associated with each cell of the grid. For example, atomic proposition $a$ is true in cell $(x,y)$ if and only the adult is in cell $(x,y)$.

\begin{figure}[h]
\centering
\begin{minipage}{.25\textwidth}
  \centering
  \includegraphics[width=.42\linewidth]{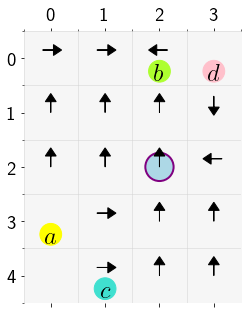}
  \caption{Nursery with equal \\rewards for all actions (A). }
  \label{fig:2}
\end{minipage}%
\begin{minipage}{.25\textwidth}
  \centering
  \includegraphics[width=.42\linewidth]{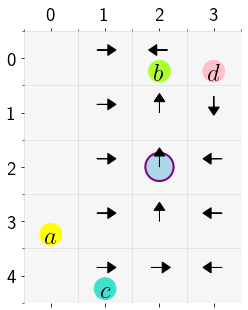}
  \caption{Nursery with lower \\reward for action $\mathsf{up}$ (B).}
  \label{fig:3}
\end{minipage}
\end{figure}

The robot is initially at the charger. Its objective is to check on the baby repeatedly and get back to the charger after doing so, while always avoiding the danger zone. Due to the dynamics of the robot, in some cases, after checking on the baby, the robot remains in the same cell for more than one time step (represented by $b \wedge \textbf{X} b$ in LTL). This behavior  disturbs the baby and requires the robot to notify (visit) the adult. This objective can be formally expressed in LTL as follows:
\begin{equation}
\begin{aligned}
& \varphi := \textbf{G} [\underbrace{\neg d}_\text{($i$)}  \wedge \underbrace{(c \to (\neg a \textbf{U} b))}_\text{($ii$)} \wedge \underbrace{(a \to \textbf{X} (\neg a \textbf{U} b)) }_\text{($iii$)} \wedge \\
  &  \underbrace{((\neg b \wedge \textbf{X} b \wedge \neg \textbf{X} \textbf{X} b) \to (\neg a \textbf{U} c))}_\text{($iv$)} \wedge \underbrace{((b 
    \wedge \textbf{X} b) \to \textbf{F} a)}_\text{($v$)} \wedge\\
   & \underbrace{((b \wedge \neg \textbf{X} b) \to \textbf{X}(\neg b \textbf{U} (a \vee c) ))}_\text{($vi$)} ]
\end{aligned}
\end{equation}
%
%
\noindent Sub-formulae $(i)$-$(vi)$  capture the following specifications which must always be true: $(i)$ avoid the region which is a danger zone; $(ii)$ if charged, visit the baby and do not visit the adult until then; 
$(iii)$ if the adult has been notified, visit the baby and do not visit the adult again until then; $(iv)$ if the baby has been checked on and left undisturbed, get charged and do not notify the adult until then; $(v)$ if the baby has been disturbed, eventually notify the adult; and $(vi)$ on leaving the baby, either notify the adult or charge and do not visit the baby again until then.

Optimal policies were synthesized by solving Problem~\eqref{milp} and simulated for a grid of size $5 \times 4$ and discount factor $0.9$ in two scenarios marked with A and B, respectively. In both scenarios, in the location of the baby, all actions were assigned a reward of $10$. 
In scenario A, all actions are allocated the same reward of $2$ in the rest of the cells. In scenario B, all actions are allocated the same reward ($2$) except for action $\mathsf{up}$ which gains a reward of $1$. 

The LDBA generated from the LTL formula and the product MDP included $57$ states and $1,140$ states, respectively. The resulting MILP, including $9,121$ continuous and $4,560$ binary variables required less than $10$-min runtime. In both cases, the optimal policy was simulated for $300$ time steps. Videos recording the robot trajectories are available online~\cite{vid}. 

Fig.~\ref{fig:2} and~\ref{fig:3} pictorially illustrate the optimal policies for both scenarios in a common state, 
when the robot moves from the charger toward the baby. The arrow directions indicate the optimal action choice. Due to the lower reward for the $\mathsf{up}$ action, the optimal policy in Fig.~\ref{fig:3} exhibits a lower preference for the $\mathsf{up}$ action when compared to the policy in Fig.~\ref{fig:2}, where all actions have equal rewards. Overall, the proposed formulation can effectively discriminate among multiple feasible policies able to almost surely satisfy the LTL formula, and can leverage the total discounted reward to suggest a different course of action in each scenario.

To analyze the scalability of the proposed formulation, we synthesized control policies for nursery scenarios with different grid sizes and reward functions. As reported in Table~\ref{tab:milp}, 
even large grid spaces including $100$ states led to problems that could be solved in less than $90$~minutes. As evident from the last two columns of Table~\ref{tab:milp}, the selection of the reward function may also affect the runtime, with all other parameters left unchanged.

\begin{table}[t]
\caption{MILP Runtime for Different Nursery Scenarios}
\begin{tabular}{c|c|c|c|c|c}
    \hline
     Grid Shape &(5,5)&(6,5)&(8,8)&(10,10)&(10,10)\\ \hline\hline
	\# Constraints & 11400 & 13680 & 29184 & 45600 & 45600 \\ 
 \# Cont. Vars.& 11401 & 13681 & 29185 & 45601 & 45601\\ 
	\# Bin. Vars. & 5700 & 6840 & 14592 & 22800 & 22800\\
	Reward  & 5 & 10 & 10 & 10 & 50\\
	at baby location & &  &  &  & \\
    Runtime [s] & 8.24 & 2.41&17.66&46.45 & 5092 \\ 
\end{tabular}
\label{tab:milp}
\end{table}

\section{Conclusions}

We designed and validated a method to synthesize discounted-reward optimal MDP policies under the constraint that a generic LTL specification is satisfied almost surely. Our approach can account for accumulated reward over the entire trajectory without any assumptions on the reward function. The discounted-reward and LTL objectives are both translated into linear constraints on a  pair of occupancy  measures, which are then connected to provide a single optimal policy via a novel MILP formulation. 
Future plans include the extension of the proposed method to the setting of unknown MDP transitions and maximum LTL satisfaction probabilities that can be less than one.

\bibliographystyle{IEEEtran}
\bibliography{ref}

\begin{thebibliography}{10}
\providecommand{\url}[1]{#1}
\csname url@samestyle\endcsname
\providecommand{\newblock}{\relax}
\providecommand{\bibinfo}[2]{#2}
\providecommand{\BIBentrySTDinterwordspacing}{\spaceskip=0pt\relax}
\providecommand{\BIBentryALTinterwordstretchfactor}{4}
\providecommand{\BIBentryALTinterwordspacing}{\spaceskip=\fontdimen2\font plus
\BIBentryALTinterwordstretchfactor\fontdimen3\font minus
  \fontdimen4\font\relax}
\providecommand{\BIBforeignlanguage}[2]{{%
\expandafter\ifx\csname l@#1\endcsname\relax
\typeout{** WARNING: IEEEtran.bst: No hyphenation pattern has been}%
\typeout{** loaded for the language `#1'. Using the pattern for}%
\typeout{** the default language instead.}%
\else
\language=\csname l@#1\endcsname
\fi
#2}}
\providecommand{\BIBdecl}{\relax}
\BIBdecl

\bibitem{koopman2017autonomous}
P.~Koopman and M.~Wagner, ``Autonomous vehicle safety: An interdisciplinary
  challenge,'' \emph{IEEE Intelligent Transportation Systems Magazine}, vol.~9,
  no.~1, pp. 90--96, 2017.

\bibitem{guiochet2017safety}
J.~Guiochet, M.~Machin, and H.~Waeselynck, ``Safety-critical advanced robots: A
  survey,'' \emph{Robotics and Autonomous Systems}, vol.~94, pp. 43--52, 2017.

\bibitem{Puterman:1994:MDP:528623}
M.~L. Puterman, \emph{Markov Decision Processes: Discrete Stochastic Dynamic
  Programming}, 1st~ed.\hskip 1em plus 0.5em minus 0.4em\relax New York, NY,
  USA: John Wiley \& Sons, Inc., 1994.

\bibitem{sutton2018reinforcement}
R.~S. Sutton and A.~G. Barto, \emph{{Reinforcement Learning: An
  Introduction}}.\hskip 1em plus 0.5em minus 0.4em\relax MIT press, 2018.

\bibitem{mnih2013playing}
V.~Mnih, K.~Kavukcuoglu, D.~Silver, A.~Graves, I.~Antonoglou, D.~Wierstra, and
  M.~Riedmiller, ``Playing {ATARI} with deep reinforcement learning,''
  \emph{arXiv preprint arXiv:1312.5602}, 2013.

\bibitem{kaelbling1996reinforcement}
L.~P. Kaelbling, M.~L. Littman, and A.~W. Moore, ``Reinforcement learning: A
  survey,'' \emph{Journal of artificial intelligence research}, vol.~4, pp.
  237--285, 1996.

\bibitem{baier2008principles}
C.~Baier and J.-P. Katoen, \emph{{Principles of Model Checking}}.\hskip 1em
  plus 0.5em minus 0.4em\relax MIT press, 2008.

\bibitem{smith2011optimal}
S.~L. Smith, J.~T{\'u}mov{\'a}, C.~Belta, and D.~Rus, ``Optimal path planning
  for surveillance with temporal-logic constraints,'' \emph{The International
  Journal of Robotics Research}, vol.~30, no.~14, pp. 1695--1708, 2011.

\bibitem{fainekos2009temporal}
G.~E. Fainekos, A.~Girard, H.~Kress-Gazit, and G.~J. Pappas, ``Temporal logic
  motion planning for dynamic robots,'' \emph{Automatica}, vol.~45, no.~2, pp.
  343--352, 2009.

\bibitem{ulusoy2013optimality}
A.~Ulusoy, S.~L. Smith, X.~C. Ding, C.~Belta, and D.~Rus, ``Optimality and
  robustness in multi-robot path planning with temporal logic constraints,''
  \emph{The International Journal of Robotics Research}, vol.~32, no.~8, pp.
  889--911, 2013.

\bibitem{wongpiromsarn2011tulip}
T.~Wongpiromsarn, U.~Topcu, N.~Ozay, H.~Xu, and R.~M. Murray, ``Tulip: a
  software toolbox for receding horizon temporal logic planning,'' in
  \emph{Proceedings of the 14th international conference on Hybrid systems:
  computation and control}, 2011, pp. 313--314.

\bibitem{fainekos2005hybrid}
G.~E. Fainekos, H.~Kress-Gazit, and G.~J. Pappas, ``Hybrid controllers for path
  planning: A temporal logic approach,'' in \emph{Proceedings of the 44th IEEE
  Conference on Decision and Control}.\hskip 1em plus 0.5em minus 0.4em\relax
  IEEE, 2005, pp. 4885--4890.

\bibitem{wongpiromsarn2012receding}
T.~Wongpiromsarn, U.~Topcu, and R.~M. Murray, ``Receding horizon temporal logic
  planning,'' \emph{IEEE Transactions on Automatic Control}, vol.~57, no.~11,
  pp. 2817--2830, 2012.

\bibitem{lahijanian2010motion}
M.~Lahijanian, J.~Wasniewski, S.~B. Andersson, and C.~Belta, ``Motion planning
  and control from temporal logic specifications with probabilistic
  satisfaction guarantees,'' in \emph{2010 IEEE International Conference on
  Robotics and Automation}.\hskip 1em plus 0.5em minus 0.4em\relax IEEE, 2010,
  pp. 3227--3232.

\bibitem{ding2014optimal}
X.~{Ding}, S.~L. {Smith}, C.~{Belta}, and D.~{Rus}, ``{Optimal Control of
  Markov Decision Processes With Linear Temporal Logic Constraints},''
  \emph{IEEE Transactions on Automatic Control}, vol.~59, no.~5, pp.
  1244--1257, 2014.

\bibitem{lahijanian2011temporal}
M.~{Lahijanian}, S.~B. {Andersson}, and C.~{Belta}, ``{Temporal Logic Motion
  Planning and Control With Probabilistic Satisfaction Guarantees},''
  \emph{IEEE Transactions on Robotics}, vol.~28, no.~2, pp. 396--409, 2012.

\bibitem{ding2011ltl}
X.~C.~D. Ding, S.~L. Smith, C.~Belta, and D.~Rus, ``{LTL control in uncertain
  environments with probabilistic satisfaction guarantees},'' \emph{IFAC
  Proceedings Volumes}, vol.~44, no.~1, pp. 3515--3520, 2011.

\bibitem{hahn2019omega}
E.~M. Hahn, M.~Perez, S.~Schewe, F.~Somenzi, A.~Trivedi, and D.~Wojtczak,
  ``{Omega-Regular Objectives in Model-Free Reinforcement Learning},'' in
  \emph{International Conference on Tools and Algorithms for the Construction
  and Analysis of Systems}, 2019, pp. 395--412.

\bibitem{bozkurt2019control}
A.~K. Bozkurt, Y.~Wang, M.~M. Zavlanos, and M.~Pajic, ``{Control synthesis from
  linear temporal logic specifications using model-free reinforcement
  learning},'' \emph{arXiv preprint arXiv:1909.07299}, 2019.

\bibitem{hasanbeig2019certified}
M.~Hasanbeig, A.~Abate, and D.~Kroening, ``Certified reinforcement learning
  with logic guidance,'' \emph{arXiv preprint arXiv:1902.00778}, 2019.

\bibitem{lacerda2014optimal}
B.~Lacerda, D.~Parker, and N.~Hawes, ``{Optimal and dynamic planning for Markov
  Decision Processes with co-safe LTL specifications},'' in \emph{2014 IEEE/RSJ
  International Conference on Intelligent Robots and Systems}.\hskip 1em plus
  0.5em minus 0.4em\relax IEEE, 2014, pp. 1511--1516.

\bibitem{lacerda2015optimal}
B.Lacerda, D.~Parker, and N.~Hawes, ``{Optimal Policy Generation for Partially
  Satisfiable Co-Safe {LTL} Specifications},'' in \emph{Proc. 24th
  International Joint Conference on Artificial Intelligence}, 2015, pp.
  1587--1593.

\bibitem{ding2013strategic}
X.~C. Ding, A.~Pinto, and A.~Surana, ``{Strategic planning under uncertainties
  via constrained Markov Decision Processes},'' in \emph{2013 IEEE
  International Conference on Robotics and Automation}.\hskip 1em plus 0.5em
  minus 0.4em\relax IEEE, 2013, pp. 4568--4575.

\bibitem{wu2017learning}
B.~Wu, B.~Hu, and H.~Lin, ``{A Learning Based Optimal Human Robot Collaboration
  with Linear Temporal Logic Constraints},'' \emph{arXiv preprint
  arXiv:1706.00007}, 2017.

\bibitem{guo2018probabilistic}
M.~{Guo} and M.~M. {Zavlanos}, ``{Probabilistic Motion Planning Under Temporal
  Tasks and Soft Constraints},'' \emph{IEEE Transactions on Automatic Control},
  vol.~63, no.~12, pp. 4051--4066, Dec 2018.

\bibitem{altman1999constrained}
E.~Altman, \emph{Constrained Markov Decision Processes}.\hskip 1em plus 0.5em
  minus 0.4em\relax CRC Press, 1999, vol.~7.

\bibitem{sickert2016limit}
S.~Sickert, J.~Esparza, S.~Jaax, and J.~K{\v{r}}et{\'\i}nsk{\`y},
  ``{Limit-Deterministic B{\"u}chi Automata for Linear Temporal Logic},'' in
  \emph{International Conference on Computer Aided Verification}.\hskip 1em
  plus 0.5em minus 0.4em\relax Springer, 2016, pp. 312--332.

\bibitem{fu2014probably}
J.~Fu and U.~Topcu, ``{Probably approximately correct MDP learning and control
  with temporal logic constraints},'' \emph{arXiv preprint arXiv:1404.7073},
  2014.

\bibitem{hasanbeig2020cautious}
M.~Hasanbeig, A.~Abate, and D.~Kroening, ``Cautious reinforcement learning with
  logical constraints,'' \emph{arXiv preprint arXiv:2002.12156}, 2020.

\bibitem{etessami2007multi}
K.~Etessami, M.~Kwiatkowska, M.~Y. Vardi, and M.~Yannakakis, ``Multi-objective
  model checking of markov decision processes,'' in \emph{International
  Conference on Tools and Algorithms for the Construction and Analysis of
  Systems}.\hskip 1em plus 0.5em minus 0.4em\relax Springer, 2007, pp. 50--65.

\bibitem{sadigh2014learning}
D.~{Sadigh}, E.~S. {Kim}, S.~{Coogan}, S.~S. {Sastry}, and S.~A. {Seshia}, ``{A
  Learning Based Approach to Control Synthesis of Markov Decision Processes for
  Linear Temporal Logic Specifications},'' in \emph{53rd IEEE Conference on
  Decision and Control}, Dec 2014, pp. 1091--1096.

\bibitem{hiromoto2015learning}
M.~{Hiromoto} and T.~{Ushio}, ``{Learning an Optimal Control Policy for a
  Markov Decision Process Under Linear Temporal Logic Specifications},'' in
  \emph{IEEE Symposium Series on Computational Intelligence}, 2015, pp.
  548--555.

\bibitem{ding2011mdp}
X.~C. {Ding}, S.~L. {Smith}, C.~{Belta}, and D.~{Rus}, ``{MDP optimal control
  under temporal logic constraints},'' in \emph{Proc.50th IEEE Conf. erence on
  Decision and Control}, Dec 2011, pp. 532--538.

\bibitem{bertsekas1995dynamic}
D.~P. Bertsekas, D.~P. Bertsekas, D.~P. Bertsekas, and D.~P. Bertsekas,
  \emph{Dynamic programming and optimal control}.\hskip 1em plus 0.5em minus
  0.4em\relax Athena scientific Belmont, MA, 1995, vol.~1, no.~2.

\bibitem{forejt2011quantitative}
V.~Forejt, M.~Kwiatkowska, G.~Norman, D.~Parker, and H.~Qu, ``Quantitative
  multi-objective verification for probabilistic systems,'' in
  \emph{International Conference on Tools and Algorithms for the Construction
  and Analysis of Systems}.\hskip 1em plus 0.5em minus 0.4em\relax Springer,
  2011, pp. 112--127.

\bibitem{hartmanns2018multi}
A.~Hartmanns, S.~Junges, J.-P. Katoen, and T.~Quatmann, ``Multi-cost bounded
  reachability in {MDP},'' in \emph{TACAS}, 2018, pp. 320--339.

\bibitem{chatterjee2006markov}
K.~Chatterjee, R.~Majumdar, and T.~A. Henzinger, ``Markov decision processes
  with multiple objectives,'' in \emph{Annual Symposium on Theoretical Aspects
  of Computer Science}, 2006, pp. 325--336.

\bibitem{ufukgame}
M.~Wen and U.~Topcu, ``{Probably Approximately Correct Learning in Stochastic
  Games with Temporal Logic Specifications},'' in \emph{Proceedings of the
  Twenty-Fifth International Joint Conference on Artificial Intelligence},
  2016, p. 3630–3636.

\bibitem{fu2015pareto}
J.~{Fu} and U.~{Topcu}, ``{Pareto efficiency in synthesizing shared autonomy
  policies with temporal logic constraints},'' in \emph{IEEE International
  Conference on Robotics and Automation}, May 2015, pp. 361--368.

\bibitem{quatmann2015counterexamples}
T.~Quatmann, N.~Jansen, C.~Dehnert, R.~Wimmer, E.~{\'A}brah{\'a}m, J.-P.
  Katoen, and B.~Becker, ``Counterexamples for expected rewards,'' in
  \emph{International Symposium on Formal Methods}, 2015, pp. 435--452.

\bibitem{junges2016safety}
S.~Junges, N.~Jansen, C.~Dehnert, U.~Topcu, and J.-P. Katoen,
  ``Safety-constrained reinforcement learning for {MDPs},'' in \emph{TACAS},
  2016, pp. 130--146.

\bibitem{delgrange2020simple}
F.~Delgrange, J.-P. Katoen, T.~Quatmann, and M.~Randour, ``Simple strategies in
  multi-objective {MDPs},'' in \emph{TACAS}, 2020, pp. 346--364.

\bibitem{kvretinsky2018rabinizer}
J.~K{\v{r}}et{\'\i}nsk{\`y}, T.~Meggendorfer, S.~Sickert, and C.~Ziegler,
  ``{Rabinizer 4: from LTL to your favourite deterministic automaton},'' in
  \emph{International Conference on Computer Aided Verification}.\hskip 1em
  plus 0.5em minus 0.4em\relax Springer, 2018, pp. 567--577,
  \url{https://owl.model.in.tum.de/publications/KretinskyMSZ18.pdf}.

\bibitem{aaai2021}
K.~C. Kalagarla, R.~Jain, and P.~Nuzzo, ``{A Sample-Efficient Algorithm for
  Episodic Finite-Horizon MDP with Constraints},'' in \emph{Proceedings of the
  AAAI Conference on Artificial Intelligence}, 2021.

\bibitem{KNP11}
M.~Kwiatkowska, G.~Norman, and D.~Parker, ``{{PRISM} 4.0: Verification of
  Probabilistic Real-time Systems},'' in \emph{Proc. 23rd International
  Conference on Computer Aided Verification (CAV'11)}, ser. LNCS,
  G.~Gopalakrishnan and S.~Qadeer, Eds., vol. 6806.\hskip 1em plus 0.5em minus
  0.4em\relax Springer, 2011, pp. 585--591.

\bibitem{Winston04}
W.~L. Winston, \emph{Operations Research: Applications and Algorithms, 4th
  Edition}.\hskip 1em plus 0.5em minus 0.4em\relax Independence, KY: Cengage
  Learning, 2004.

\bibitem{dolgov2005stationary}
D.~Dolgov and E.~Durfee, ``{Stationary Deterministic Policies for Constrained
  MDPs with Multiple Rewards, Costs, and Discount Factors},'' in
  \emph{Proceedings of the 19th International Joint Conference on Artificial
  Intelligence}, vol.~19, 2005, p. 1326–1331.

\bibitem{vid}
\url{https://bit.ly/3dKVTD6}.

\end{thebibliography}

\end{document}